\else\hypersetup{linktocpage=true}\fi
\theoremstyle{plain}
\newtheorem{theorem}{Theorem}
\newtheorem{definition}{Definition}
\newtheorem{corollary}{Corollary}
\newcommand{\End}{\mathrm{End}}
\newcommand{\Alt}{\mathrm{Alt}}
\newcommand{\gup}{g_{\scriptscriptstyle\uparrow}}
\newcommand{\gdown}{g_{\scriptscriptstyle\downarrow}}
\newcommand{\integers}{\mathbb{Z}}
\newcommand{\reals}{\mathbb{R}}
\newcommand{\EMT}{\mathbf{T}}
\newcommand{\Section}{\Gamma}
\newcommand{\id}{\mathrm{id}}
\newcommand{\EMform}{\mathcal{T}}
\newcommand{\CurrentForm}{\mathcal{J}}
\newcommand{\MM}{\mathfrak{M}\,}
\newcommand{\Wedge}{{\textstyle\bigwedge}}
\newcommand{\Lie}{\mathrm{Lie}}
\newcommand{\GL}{\mathrm{GL}}
\newcommand{\Ad}{\mathrm{Ad}}
\newcommand{\Poin}{\mathrm{Poin}}
\newcommand{\Lor}{\mathrm{Lor}}
\newcommand{\Trans}{\mathrm{Trans}}
\newcommand{\Aff}{\mathrm{Aff}}
\begin{document}

\title{Laue's Theorem Revisited: 
Energy-Momentum Tensors, Symmetries, and the Habitat of Globally
Conserved Quantities}

\author{Domenico Giulini\\
Institute for Theoretical Physics\\
Riemann Center for Geometry and Physics\\
Leibniz University of Hannover, Germany\\
\emph{giulini@itp.uni-hannover.de}}
\date{}

\maketitle
\vspace{-0.5cm}
\begin{abstract}
\noindent
The energy-momentum tensor for a particular matter component 
summarises its local energy-momentum distribution in terms of 
densities and current densities. We re-investigate under what 
conditions these local distributions can be integrated to 
meaningful global quantities. This leads us directly to 
a classic theorem by Max von Laue concerning integrals of 
components of the energy-momentum tensor, whose statement and 
proof we recall. In the first half of this paper we do this 
within the realm of Special Relativity and in the traditional 
mathematical language using components with respect 
to affine charts, thereby focusing on the intended physical 
content and interpretation. In the second half we show how 
to do all this in a proper differential-geometric fashion and 
on arbitrary space-time manifolds, this time focusing on the 
group-theoretic and geometric hypotheses underlying these 
results. Based on this we give a proper geometric statement 
and proof of Laue's theorem, which is shown to generalise 
from Minkowski space (which has the maximal 
number of isometries) to space-times with significantly less
symmetries. This result, which seems to be new, not only 
generalises but also clarifies the geometric content and 
hypotheses of Laue's theorem. A series of three appendices 
lists our conventions and notation and summarises some of the
conceptual and mathematical background needed in the main text. 
\medskip 

\noindent
\emph{Keywords:} 
energy-momentum tensor, momentum map, Laue's theorem\\
\emph{Mathematics Subject Classification (MSC):} 
53D20, 58D19, 83A05\\
\emph{Physics and Astronomy Classification Scheme(PACS)}:
03.30.+p, 11.30.Cp, 11.30.-j

\smallskip
\noindent
\emph{Acknowledgements:}
I thank the Collaborative Research Centre (SFB) 1227 \emph{DQ-mat}
of the DFG at Leibniz  University of Hannover for support.
\end{abstract}

\newpage
\setcounter{tocdepth}{2}
\tableofcontents
\newpage

\section*{Introduction}
\label{sec:Introduction}
This paper deals with a well known problem in classical 
relativistic field-theory concerning the question of 
precisely how a local distribution of energy and momentum 
can be used to construct global quantities that, with some 
justification, can be said to represent these quantities 
globally.

The main body of the paper is organised into two Sections 
of approximately the same length and six subsections each.
 In addition there are three fairly detailed appendices
 providing some standard background material that we
 thought should not clutter the main text but the content
 of which is nevertheless important in order to follow 
the conceptual and technical arguments of the two main
 sections.

In Section~\ref{sec:TraditionalPicture} we will review and 
critically discuss the necessary and sufficient conditions 
under which meaningful global quantities may be constructed
for Poincar\'e invariant field theories, that is, in the 
context of Special Relativity (henceforth abbreviated SR).
We will put particular emphasis on the requirement that 
the global quantities in question must change according 
to a particular representation of the Poincar\'e group 
(inhomogeneous Lorentz group) under the action of that 
group on all those fields whose energy-momentum
 distribution is considered. This is a non-trivial
 requirement, which is conceptually essential for the
 proper physical interpretation of the quantities to be
 constructed. We will find that it takes the simple
 mathematical form of a condition of equivariance, 
provided a careful distinction is made between the
 dynamical fields (carrying energy and momentum) and
 auxiliary ones. This will then naturally lead us to 
a classic theorem dating back to 1911 due to Max von 
Laue (still without the nobiliary particle ``von'' in 
his name at the time he wrote the relevant paper,
hence it is usually referred to simply as ``Laue's
 theorem''), the proof and physical relevance of which 
we will also discuss. 

The mathematical language of
 Section~\ref{sec:TraditionalPicture} 
deliberately sticks to the traditional component
 representation of geometric objects over Minkowski 
space. It is this component language that is most 
familiar to working physicists and that is often the 
most direct way to import the physical intuition and 
intentions behind various constructions. Most of what is 
presented in Section~\ref{sec:TraditionalPicture} is 
well known, in one form or another, though I believe 
that some of the subtle points are here spelled out 
more clearly and comprehensibly than in other sources.

In Section~\ref{sec:GeometricTheory} we will then 
transform the preceding, coordinate-based discussion 
into a proper differential-geometric language. As is 
widely appreciated, coordinate-based language not only
tends to hide the geometric assumptions that are necessary
in order to render the whole construction meaningful, but
it also bears the danger to mislead to geometrically
meaningless constructions, like, e.g., taking integrals
 over tensor components -- a construction often encountered 
in physics. This is precisely what is done in the
 formulation and proof of Laue's theorem, of which no 
proof in component-free form has hitherto been given, nor
 have its geometric hypotheses been spelled out
correctly. We will fill that gap in
Section~\ref{sec:GeometricTheory}
in which we give a proper statement and proof of a 
generalisation of Laue's theorem that remains valid 
outside SR and wich contains the classical statement
as a special case.  

Throughout, and in particular in 
Section~\ref{sec:GeometricTheory}, we shall make free use
 of standard differential geometric concepts, like
 manifolds, tensor bundles and their sections, natural
 bundles, various forms of derivatives (exterior, Lie,
 covariant, exterior-covariant), Hodge duality, etc. All
 these should be familiar to physicists used to geometric
 field theory, particularly those familiar with General
 Relativity (henceforth abbreviated by GR) on the level 
of a good modern text-book, like \cite{Straumann:GR2013}.
Our notation and conventions relating to these concepts 
are explained in detail in Appendices~B and C. 
Appendix~\ref{sec:Appendix-MinkowskiSpace} contains some 
background material relating to Minkowski space and 
its automorphism group, the Poincar\'e group. 
 Appendix~\ref{sec:Appendix-MinkowskiSpace} is also meant to 
remind the reader on the notion of affine spaces and
 associated structural elements, like affine bases, 
affine charts, and affine groups, as well as the
 distinction between active and passive transformations.
 Albeit basic, these notions often tend to be 
suppressed or even forgotten in the physics literature. 
Being aware of them is important in order to fully 
appreciate the discussion in 
Section~\ref{sec:TraditionalPicture}, to which we 
now turn.

\section{The traditional picture}
\label{sec:TraditionalPicture}
In this section, which contains six subsections, we start 
with our first half of the program as just outlined. As 
already stressed, we deliberately adopt a traditional 
mathematical language, usually employed in physics papers 
and physics textbooks, to keep as close as possible 
contact to the intentions and intuitions coming from 
physics. At the same time we try to carefully avoid
some of the misconceptions which are often imported by 
either assuming some unwarranted mathematical properties,
or by misinterpreting them. A typical example of the 
latter sort results from an insufficient distinction  
between the two meanings of Poincar\'e transformations
(compare Appendix~\ref{sec:Appendix-PassiveActive}):
First, as mere transformations between affine charts on 
Minkowski space $M$ (passive interpretation), which 
necessarily affect the component representation of 
\emph{all} geometric structures defined over $M$, and, 
second, automorphism of Minkowski space (i.e. special 
affine maps), which we can use to act selectively on only 
\emph{some} geometric structures which are physically 
distinguished in a contextual fashion. For example, we 
may wish to act with a boost transformation on that subset 
of fields in Minkowski space whose energy-momentum 
distribution we consider, while leaving all other fields
in their original state. That is, we may wish to boost 
\emph{some} of the matter fields \emph{relative} to 
all the others providing the reference. This will be the 
situation we actually encounter below.  

\subsection{Energy-momentum tensors}
\label{sec:EnergyMomentumTensors}
The energy-momentum distribution of a physical system is
characterised by its \emph{energy-momentum tensor} $\EMT$,
which is a section in the bundle $TM\otimes TM$. Following 
standard conventions we write 
$\EMT\in\Section(TM\otimes TM)$(compare 
Appendix~\ref{sec:Appendix-ConventionsNotation}). So far 
$(M,g)$ may represent any spacetime. Here we are
 restricting attention to four-dimensional $M$ (this will
be relaxed in the second section). Following standard 
conventions in physics we denote space-time indices 
by greek letters if they refer to four dimensions
 (otherwiswe latin) and let 
$\alpha,\beta,..,\in\{0,1,2,3\}$, where the index 
$0$ refers to a timelike basis element or coordinate.

Consider a point $p\in M$ and a timelike 
future-pointing vector $u\in T_pM$ normalised to $c$
(velocity of light), so that $g(u,u)=c^2$ . That 
vector defines an instantaneous state of motion of an
observer at $p$ whose four-velocity is just $u$.
We take $e_0:=u/c$ as a first (timelike) vector of 
a orthonormal basis $\{e_0,e_1,e_2,e_3\}$ of $T_pM$, such
 that $g(e_\alpha,e_\beta)=g_{\alpha\beta}=\mathrm{diag}
(1,-1,-1,-1)$. 
We call $u_\perp:=\{v\in T_pM:g_p(u,v)=0\}=\mathrm{span}\{e_1,e_2,e_3\}$
the instantaneous rest-space of the observer $u$. 

If $M$ is Minkowski space (an affine space; see 
Appendix~\ref{sec:Appendix-MinkowskiSpace}) we may identify each 
$T_pM$ with the underlying vector space $V$. Then the
 subset $p+u_\perp$ of $M$ is a spacelike affine 
hyperplane consisting of all events in $M$ which are
 Einstein synchronous with $p$ for an inertial (i.e. 
force free) observer moving along the timelike line 
tangent to $u$ through $p$. All this is standard terminology 
in physics and mathematical relativity; see, e.g., 
\cite{Sachs.Wu:GR-Math} for a good account. 

Let $\{\theta^0,\theta^1,\theta^2,\theta^3\}$ 
be the dual basis to $\{e_0,e_1,e_2,e_3\}$, i.e. 
$\theta^\alpha(e_\beta)=\delta^\alpha_\beta$. Then 
$T^{\alpha\beta}=\EMT(\theta^\alpha,\theta^\beta)$
are the so-called contravariant components of 
$\EMT$, whose physical interpretation is as follows:
$T^{00}$ is the energy density, $\vec S:=(cT^{01},cT^{02},cT^{03})$ 
are the three components of the energy current-density,  
$\vec G:=(T^{10}/c,T^{20}/c,T^{30}/c)$ are the three components 
of the momentum density, and finally the symmetric 
$3\times 3$ matrix $\{T^{ab}\}$ comprises the $9$ components 
of the momentum current-density. Here the word ``components'' 
always refers to vectors and tensors in the instantaneous 
3-dimensional rest frame $u_\perp$ with respect to the basis 
$\mathrm{span}\{e_1,e_2,e_3\}$. As such $\EMT$ comprises  
$1+3+3+9=16$ independent components characterising the 
energy-momentum distribution of matter at each spacetime point.   

Often, though not always, $\EMT$ is assumed to be symmetric,
\begin{equation}
\label{eq:EM-Symmetry}
T^{\alpha\beta}=T^{\beta\alpha}\,.
\end{equation}  
In the notation above this means that the energy current-density 
$\vec S$ and the momentum density $\vec G$ are related by 
$\vec G=\vec S/c^2$ and that the $n$-th component of the $m$-th
momentum current-density equals the $m$-th component of the 
$n$-th momentum current-density, i.e. $\Sigma^{mn}=\Sigma^{nm}$. 
Hence a symmetry \eqref{eq:EM-Symmetry} reduces the $16$ 
to $1+3+6=10$ independent components. Energy momentum 
tensors representing all the sources of gravitational 
fields in Einstein's field equations of GR are necessarily
symmetric, though this does not necessarily imply that 
they cannot be meaningfully decomposed into a sum of non
symmetric ones attributed to various sub-components
of matter. In any case, we shall assume $\EMT$ to be 
symmetric throughout this paper. 

There is another condition that $\EMT$ may satisfy and 
which we wish to mention here, although we shall not 
need it right now (it becomes of central importance later 
on). It is the covariant divergencelessness of 
$\EMT$, which we write as $\nabla\cdot\EMT=0$, or in
components 
\begin{equation}
\label{eq:EM-Conservation}
\nabla_\beta T^{\alpha\beta}=0\,.
\end{equation}  
Here $\nabla$ denotes the Levi-Civita covariant derivative for $g$,
that is, the unique torsion-free and metric (i.e $\nabla g=0$)
connection (we use ``covariant derivative'' and ``connection'' 
synonymously).  
Equation \eqref{eq:EM-Conservation} implies the local
 conservation of energy-momentum encoded by $\EMT$,
 provided $g$ admits certain symmetries generated by
 Killing vector fields. Without such symmetries energy 
and momentum of the matter alone will not be conserved. 
In that case the interpretation of 
\eqref{eq:EM-Conservation} in GR is that locally energy 
and momentum can be exchanged between the matter 
(described by $\EMT$) and the gravitational field 
(described by $g$); see, e.g.,
\cite{Straumann:GR2013}.

\subsection{Global energy momentum in SR}
\label{sec:GlobalEM-inSR}
Let us now restrict $(M,g)$ to Minkowski space. 
In manifold terms this means that $M$ is diffeomorphic 
to $\reals^4$ and that $g=\eta$, where here and in 
the sequel $\eta$ denotes a Lorentz metric whose 
Levi-Civita connection is flat. We also recall the 
affine structure of Minkowski space and refer to 
Appendix~\ref{sec:Appendix-MinkowskiSpace} for details. Important here 
is that the affine structure endows the manifold $M$ with a
 preferred set of global charts, called \emph{affine charts}.
With respect to affine charts all connection coefficients 
vanish and \eqref{eq:EM-Conservation} assumes the form 
\begin{equation}
\label{eq:EM-ConservationSRT}
\partial_\beta T^{\alpha\beta}=0\,,
\end{equation}   
which can indeed be interpreted as local energy-momentum 
conservation. 

Let us digress for a moment in order to make a 
few comment concerning the physical significance 
of the affine structure of Minkowski space. 
As is well known, the timelike lines within the 
set of all lines (one-dimensional affine subspaces)
 already fix the entire affine structure of 
Minkowski space. 
In this fashion the affine structure of Minkowski 
space becomes a faithful mathematical representation 
of the physical law of inertia, according to which 
a particular subset of motions is characterised as
 ``force-free'' or, as already said, ``inertial''. 
In this way affine maps, too,  acquire  the physical
 significance as the symmetry group of the law of
 inertia:  they map force-free motions to  
force-free motions. In that sense the law of 
inertia should be regarded as a special case of 
a ``path structure'' in the sense of
 \cite{Ehlers.Koehler:1977}, without the 
specification of which the very notion 
of ``force'' makes no sense. By definition,  
``forces'' are the causes for deviations from 
inertial motions, which must be defined first. 
For more on this, see \cite{Giulini:2002b}
and \cite{Pfister.King:InertiaAndGravitation}.
The take-home message for us at this point is
that in using affine structures we are, in fact, 
already in the process of modelling physical 
dynamical principles. 

After this small digression we return to our 
main discussion. We note that Hypersurfaces of 
constant value for a single affine-chart
 coordinates are affine hyperplanes. In such charts a 
moment in time is represented by a spacelike affine 
hyperplane $\Sigma\subset M$ in Minkowski space; e.g., 
the hyperplane of constant time zero: 
$\Sigma:=\{p\in M:x^0(p)=0\}$. The global quantity 
we wish to construct is either associated with the whole 
hyperplane (i.e. involves all of space at this time) or a 
bounded subset thereof, which we shall also denote by 
$\Sigma$. For its construction we do not wish to impose
\eqref{eq:EM-ConservationSRT} in order to be able to speak of 
the system's overall energy and momentum in situations 
where it is not conserved. Hence, in what follows next, 
we shall only assume  symmetry \eqref{eq:EM-Symmetry}. 
Later we will have plenty of opportunity to discuss the 
impact of \eqref{eq:EM-ConservationSRT}

 We wish to speak of the \emph{total} energy and momentum 
contained in $\Sigma$ (i.e. ``at the moment in time''
represented by $\Sigma$). The most naive thing 
to do is to just integrate the energy density $T^{00}$
and momentum density\footnote{Actually, as explained above, 
the momentum density is $1/c$ times the components 
$T^{a0}$ ($a=1,2,3$), which means that our $P^\alpha$
will actually equal $c$ times the components of 
ordinary four-momentum. In order 
not to carry along all the factors of $c$  we shall ignore 
this difference, which is completely irrelevant for our 
purpose and which disappears in units where $c=1$.} 
$T^{a0}$ ($a=1,2,3$) over $\Sigma$. Hence we consider the 
four numbers, $\{P^\alpha: \alpha=0,1,2,3\}$ defined by 
\begin{equation}
\label{eq:SRT-FourMomentum}
P^\alpha:=\int_\Sigma T^{\alpha\beta}n_\beta\,d\mu_\Sigma\,.
\end{equation}
Here $n_\beta$ are the covariant components of the normal 
to $\Sigma$ in $M$ and $d\mu_\Sigma$ is the measure 3-form 
on $\Sigma$ induced from the measure 4-form $\varepsilon$ 
on $(M,g)$, which in affine coordinates $x^\alpha$ is 
just the Lebesgue measure: 
$\varepsilon=dx^0\wedge dx^1\wedge dx^2\wedge dx^3$. 
Writing $i_n$ for the insertion map that inserts $n$ 
into the first slot of a general form, we have
\begin{equation}
\label{eq:SRTInducedMeasure}
d\mu_\Sigma=\star n^\flat=i_n\varepsilon=
\tfrac{1}{3!}n^\mu \varepsilon_{\mu\alpha\beta\gamma}
dx^\alpha\wedge dx^\beta\wedge dx^\gamma\,.
\end{equation}
Here and in the following the superscript $\flat$ on
a vector denotes its corresponding dual vector under 
the isomorphism given by the metric, which we define 
in \eqref{eq:DefMusicalIsomorphisms}. Also $\star$ 
denotes the Hodge duality map defined in
\eqref{eq:DefHodgeDuality}.
 
The intended result of the somewhat bold method to just 
integrate all the densities is, of course, that these 
four numbers somehow represent the total energy and total
 translatory momentum of the system in the rest-frame 
represented by the hyperplane $\Sigma$ and with respect 
to the affine coordinates $x^a$ used within in. We further 
expect these four numbers to define an element in a vector 
space that carries a specific representation of the Lorentz 
group, namely the defining one.  Physicists often express 
this expectation by saying that total energy-momentum should
form a ``four vector'', meaning that the four numbers
 transform according to the defining representation of 
$\mathrm{SO}(1,3)$. But is that really true? And if so, 
how would we characterise the vector space this 
``energy-momentum vector'' is an element of? In fact, 
as is well known, and as we will see in a moment, the 
fulfilment of all these expectations is not at all 
guaranteed and depends on further global conditions 
on $\EMT$ that we will now investigate. 

\subsection{Local energy-momentum transformation}
\label{sec:LocalEnergyMomentumTransformation}
The four numbers \eqref{eq:SRT-FourMomentum} we now calculate 
for two different dynamical systems whose energy-momentum 
distributions are represented by two different energy-momentum 
tensors, $\EMT$ and $\bar\EMT$. The two systems are chosen such 
that they are related by an active Lorentz transformation 
$\Lambda$, which means that 
\begin{equation}
\label{eq:SRTBoostedEMT-1}
(\Lambda,\EMT)\mapsto \Lambda\cdot\EMT:=\Lambda\otimes\Lambda (\EMT\circ \Lambda^{-1})=:\bar\EMT\,,
\end{equation}
where in the last step we simply introduced the notational 
abbreviation of writing an overbar for the $\Lambda$-transformed 
quantity. Note that \eqref{eq:SRTBoostedEMT-1} just denotes the 
push-forward transformation law of a contravariant tensor field 
under the action of the Lorentz transformation $\Lambda$. It is 
a special example for the natural lift of the diffeomorphic action 
of the Lorentz group on $M$ to any bundle associated to the 
principal bundle of linear frames, here to the bundle 
$TM\otimes TM$ (compare Appendix~\ref{sec:Appendix-NaturalBundles}).
In terms of component functions with respect to our affine 
coordinates, this reads
\begin{equation}
\label{eq:SRTBoostedEMT-2}
{\bar T}^{\alpha\beta}\bigl(x\bigr)
=\Lambda^\alpha_\gamma\Lambda^\beta_\delta T^{\gamma\delta}
\bigl(\Lambda^{-1}x\bigr)\,,
\end{equation}
where the $x$ and $\Lambda^{-1}x$ in the argument stands 
for all four $x^\mu$ and 
$(\Lambda^{-1})^\mu_\nu x^\nu$, $\mu\in\{0,1,2,3\}$,
respectively. 

The following is important to note: For \emph{both} dynamical 
systems the components of their respective dynamical quantities 
refer to the \emph{same} affine coordinates 
$\{x^\alpha:\alpha=0,1,2,3\}$. Our Poincar\'e  transformation 
acts on the dynamical fields, not on the observer or a 
reference-system, and produces via its action another system 
with different observable features relative to the very same 
reference system that we exclusively use. 

To end this subsection we wish to alert the reader to another 
non-trivial aspect hidden in \eqref{eq:SRTBoostedEMT-1}, also
connection with the active interpretation of transformations
employed here, which is the following: Usually we think of 
$\EMT$ not as just simply given, but as functionally 
specified in terms of elementary fields, which we separate 
into two sets, $F$ and $F'$, whose distinction is important.
 Hence we write $\EMT(F,F')$. $F$ collectively denotes all 
fields of which $\EMT$ represents the energy-momentum
distribution, and $F'$ collectively denotes all possibly 
existent further background fields that we may use to 
define $\EMT$. For example, in Maxwellian electrodynamics
in its traditional ``metric'' formulation\footnote{The 
metric formulation of electrodynamics assumes a background 
metric of spacetime, the Minkowski metric. There are also 
metric-free formulations, as discussed in detail in \cite{Hehl.Obukhov:Electrodynamics}.},
$F$ would be the Faraday tensor, comprising electric and 
magnetic fields, and $F'$ would be the background 
Minkowskian metric. The value-space of the dynamical 
fields $F$ is assumed to carry some representation 
$D$ of the Lorentz group, so that the field $F$ itself (as element 
of a mapping space; later to be identified with a section in an 
appropriate bundle) transforms as 
\begin{equation}
\label{eq:SRT-MappingOfField}
(\Lambda,F)\mapsto \Lambda\cdot F:=D(\Lambda)\circ F\circ\Lambda^{-1}\,.
\end{equation}
The non-trivial requirement behind \eqref{eq:SRTBoostedEMT-1} 
then is:
\begin{equation}
\label{eq:SRTBoostedEMT-3}
\EMT(\Lambda\cdot F\,,\,F')=\Lambda\cdot\EMT(F,F')\,.
\end{equation}
Note that the corresponding statement 
\begin{equation}
\label{eq:SRTBoostedEMT-4}
\EMT(\Lambda\cdot F\,,\,\Lambda\cdot F')=\Lambda\cdot\EMT(F,F')\,,
\end{equation}
in which $\Lambda$ is also allowed to act on $F'$ 
as $(\Lambda,F')\mapsto \Lambda\cdot F':=D'(\Lambda)\circ F'\circ\Lambda^{-1}$ via some representation $D'$ on
the value-space of $F'$, would be much weaker and, in fact, 
almost trivial. It would merely require $\EMT$ to be a 
covariant construct of $F$ and $F'$. In contrast, equation 
\eqref{eq:SRTBoostedEMT-3} is far more demanding. 
It requires that the energy-momentum distribution of the 
$\Lambda$-shifted fields $F$ is the $\Lambda$-shift 
of the original energy-momentum distribution of $F$,
given the same background $F'$ throughout. For example,
coming back to ordinary (metric) electromagnetism in
Minkowski space, let now $\Lambda$ be any diffeomorphism, 
$F$ again the Farady tensor, and $F'$ the background 
Minkowski metric. Then \eqref{eq:SRTBoostedEMT-4} will 
always hold true, but \eqref{eq:SRTBoostedEMT-3} only 
if $\Lambda$ stabilizes $F'$, i.e. $\Lambda\cdot F'=F'$, 
which just says that $\Lambda$ must be an isometry and 
hence a Poincar\'e transformation.  

It seems remarkable that the non-trivial requirement 
behind  \eqref{eq:SRTBoostedEMT-3} is hardly ever spelled 
out, or even recognised, in standard text-books. 
A notable exception is found in \S\,31 and \S\,$31^*$ of Fock's 
classic text \cite{Fock:STG}, where Fock discusses the 
requirement (which he calls a ``physical principle'')
that ``the mass tensor must be a function of the state 
of the system'' (\cite{Fock:STG}, p.\,95). Here ``mass 
tensor'' refers to the energy-momentum tensor rescaled by 
$c^{-1}$. This sound as if Fock had \eqref{eq:SRTBoostedEMT-3}
in mind, which is also supported by his discussion that 
stresses the non-triviality of his ``physical principle''. 
However, semantically an ambiguity remains as to whether 
Fock's ``system'' refers to, in our notation, the 
system represented by $F$, or the wider system represented 
by $(F,F')$. Fock unfortunately does not explicitly address 
this distinction which becomes particularly important in 
the more general cases discussed below 
in\,\ref{sec:GroupActionsAndCharges}.

\subsection{Global energy-momentum transformation}
\label{sec:GlobalEM-Trans}
We now wish to see how the four numbers 
\eqref{eq:SRT-FourMomentum} are affected by a 
particular set of Lorentz transformations, namely 
pure boosts.\footnote{Pure boosts can be characterised 
within the Lorentz group only with reference to a 
timelike direction, which here is taken to be 
$\partial/\partial x^0$. A `pure boost' is then 
any linear Lorentz transformation that moves points 
in a timelike 2-plane containing $\partial/\partial x^0$
and leaves the $g$-orthogonal spacelike 2-plane poinwise 
fixed.Pure boosts do not form a subgroup of the Lorentz
group.} We restrict to boosts because no non-trivial 
statement of the kind developed here follows from translations
and rotations. 

So let us specifically take a boost in $x^1$-direction:
\begin{equation}
\label{eq:SRTPureBoost}
\Lambda^\alpha_\beta=
\begin{pmatrix}
\gamma&\beta\gamma&0&0\\
\beta\gamma&\gamma&0&0\\
0&0&1&0\\
0&0&0&1\\
\end{pmatrix}\,.
\end{equation}
Here, as usual,  $\beta:=v/c$ and $\gamma:=1/\sqrt{1-\beta^2}$. Then
\begin{subequations}
\label{eq:SRT-EMT-Boost-Trans}
\begin{alignat}{2}
\label{eq:SRT-EMT-Boost-Trans-a}
&{\bar T}^{00}(x)
&&=\gamma^2\Bigl[
T^{00}(\underline{x})
+\beta^2 T^{11}(\underline{x})
+2\beta T^{01}(\underline{x})\Bigr]\\
\label{eq:SRT-EMT-Boost-Trans-b}
&{\bar T}^{11}(x)
&&=\gamma^2\Bigl[
T^{11}(\underline{x})
+\beta^2 T^{00}(\underline{x})
+2\beta T^{01}(\underline{x})\Bigr]\\
\label{eq:SRT-EMT-Boost-Trans-c}
&{\bar T}^{01}(x)
&&=\gamma^2\Bigl[
\bigl(1+\beta^2)T^{01}(\underline{x}\bigr)
+\beta\bigl(
T^{00}(\underline{x})+T^{11}(\underline{x})
\bigr)\Bigr]\\
\label{eq:SRT-EMT-Boost-Trans-d}
&{\bar T}^{0n}(x)
&&=\hspace{3pt}\gamma\,\Bigl[
T^{0n}(\underline{x})+\beta\,T^{1n}(\underline{x})\Bigr]\\
\label{eq:SRT-EMT-Boost-Trans-e}
&{\bar T}^{1n}(x)
&&=\hspace{3pt}\gamma\,\Bigl[
T^{1n}(\underline{x})+\beta\,T^{0n}(\underline{x})\Bigr]\\
\label{eq:SRT-EMT-Boost-Trans-f}
&{\bar T}^{nm}(x)
&&=\,T^{nm}(\underline{x})\,.
\end{alignat}
Here we set for abbreviation,
\begin{equation}
\label{eq:SRT-EMT-Boost-Trans-g}
\underline{x}:=\Lambda^{-1}x=\bigl(
\gamma(x^0-\beta x^1)\,,\
\gamma(x^1-\beta x^0)\,,\
x^2\,,\
x^3\bigr)\,,
\end{equation}
\end{subequations}
and the indices $n$ and $m$ may take on independently any
of the two values $2$ and $3$.

In the integral \eqref{eq:SRT-FourMomentum} we have $n^\mu=1$
for $\mu=0$ and $n^\mu=0$ otherwise, for $\Sigma$ is the zero-level
set of $x^0$. This implies that the induced measure 
\eqref{eq:SRTInducedMeasure} is simply the Lebesgue measure in 
the spatial affine coordinates
\begin{equation}
\label{eq:SRT-LebesgueMeasure}
d\mu_\Sigma =dx^1\wedge dx^2\wedge dx^3=:d^3x\,.
\end{equation}
For ${\bar P}^\alpha$ the integrals \eqref{eq:SRT-FourMomentum} now involve the four 
components ${\bar T}^{00}$, ${\bar T}^{01}$ and ${\bar T}^{0n}$
($n=2,3$), evaluated on $\Sigma$. For $x^0=0$ we have 
$\underline{x}
=\bigl(
-\beta\gamma x^1\,,\,
\gamma x^1\,,\,
x^2\,,\,
x^3
\bigr)$. Therefore the integrands \eqref{eq:SRT-FourMomentum}
will involve the components ${\bar T}^{00}$, ${\bar T}^{01}$
and ${\bar T}^{0n}$ at all times $-\beta\gamma x^1$, 
where $x^1$ takes all values from within the intersection 
of $\Sigma$ with the support of $\EMT$. As a result, no 
simple statement relating the integrals of the original 
and the boosted systems can be made for generally time dependent $T^{\alpha\beta}$. On the other hand, if we assume 
the components $T^{\alpha\beta}$ to be independent of time $x^0$,
i.e.,
\begin{equation}
\label{eq:SRT-Stationarity}
\partial_0T^{\alpha\beta}=0\,,
\end{equation}
all dependencies of the barred components  
${\bar T}^{\alpha\beta}$ on the spatial coordinates 
derive from the spatial dependence of $T^{\alpha\beta}$
in the simple fashion given by 
\eqref{eq:SRT-EMT-Boost-Trans-g} for $x^0=0$: 
$x^1$ is multliplied with $\gamma$ whereas the 
dependence on $x^2$ and $x^3$ is unchanged. By a simple 
change of variables in the integration over $\Sigma$ 
this results in a common factor $1/\gamma$ and we obtain  
\begin{subequations}
\label{eq:SRT-FourMomentumTrans}
\begin{alignat}{2}
\label{eq:SRT-FourMomentumTrans-a}
&{\bar P}^0&&\,=\,\gamma
\left(
P^0+2\beta P^1+\int_\Sigma T^{11}d^3x 
\right)\,,\\
\label{eq:SRT-FourMomentumTrans-b}
&{\bar P}^1&&\,=\,\gamma
\left(
(1+\beta^2)P^1+\beta P^0+\beta\int_\Sigma T^{11}d^3x  
\right)\,,\\
\label{eq:SRT-FourMomentumTrans-c}
&{\bar P}^n&&\,=\,
P^n+\beta \int_\Sigma T^{1n}d^3x\,. 
\end{alignat}
\end{subequations}
This we compare to the defining representation 
of the Lorentz group on $\mathbb{R}^4$: 
\begin{subequations}
\label{eq:SRT-FourVectorTrans}
\begin{alignat}{2}
\label{eq:SRT-FourVectorTrans-a}
&{\bar X}^0&&\,=\,\gamma
\left(
X^0+\beta X^1\right)\,,\\
\label{eq:SRT-FourVectorTrans-b}
&{\bar X}^1&&\,=\,\gamma
\left(
X^1+\beta X^0\right)\,,\\
\label{eq:SRT-FourVectorTrans-c}
&{\bar X}^n&&\,=\,X^n \,.
\end{alignat}
\end{subequations}
It follows that the $P^\alpha\rightarrow{\bar P}^\alpha$
transformation follows the pattern \eqref{eq:SRT-FourVectorTrans} 
for all $\beta$, if and only if $P^1=0$ and all three
integrals $\int_\Sigma T^{1a}d^3x$ vanish for $a=1,2,3$.

Now, as the direction of our boost was arbitrary, 
repeating the argument with boosts in the 
$2-$ and $3-$ direction shows that
$P^\alpha\rightarrow{\bar P}^\alpha$ transforms like
the components of four vector under boosts, if and only 
if the integrals over $\Sigma$ of \emph{all} components 
$T^{\mu\nu}$ vanish, except that of $T^{00}$. Hence
\begin{equation}
\label{eq:SRT-VanishingIntegrals}
\int_\Sigma T^{\mu m}\,d^3x=0\
\end{equation}
for all $\mu\in\{0,1,2,3\}$ and all $m\in\{1,2,3\}$.

A priori there seems to be no obvious reason why  
any stationary physical system should have an 
energy-momentum tensor satisfying \eqref{eq:SRT-VanishingIntegrals},
In fact, \eqref{eq:SRT-VanishingIntegrals} will fail in the general 
stationary case. But recall that so far we did not assume 
energy-momentum conservation \eqref{eq:EM-ConservationSRT}. 
It is Laue's theorem that precisely connects energy-momentum 
conservation and \eqref{eq:SRT-VanishingIntegrals}.

\subsection{Laue's theorem: classical statement}
\label{sec:LaueClassical} 
Laue's theorem was first proven (under slightly stronger 
hypotheses, some of which turn out to be unnecessary) 
in his classic paper \cite{Laue:1911b}, in which he for 
the first time explained in a general fashion the impact 
of Special Relativity onto the dynamical description of 
systems in static equilibrium. The theorem is based on 
the obvious identity, 
\begin{equation}
\label{eq:SRT-LaueThmProof}
\partial_n(T^{\mu n}\,\varphi)=(\partial_n T^{\mu n})\,\varphi+T^{\mu n}\,\partial_n\varphi\,,
\end{equation}
where $\varphi:\Sigma\rightarrow\reals$ is any 
smooth function. 
If we assume energy-momentum conservation 
\eqref{eq:EM-ConservationSRT} and stationarity 
\eqref{eq:SRT-Stationarity}, the first 
term on the right-hand side vanishes. 
Applying Gau{\ss}' theorem (Stokes' theorem 
in 3-dimensions), we get 
\begin{equation}
\label{eq:SRT-GaussTheoremSpatial-1}
\int_\Sigma T^{\mu n}\partial_n\varphi\,d^3x=
\int_{\partial\Sigma} T^{\mu n} \varphi \nu_n\,do\,,
\end{equation}
where $\nu_n$ are covariant components of the 
outward-pointing normal of the boundary 
$\partial\Sigma$ of $\Sigma$ and $do$ is the 
induced measure on the 2-dimensional boundary.
Now, if $(T^{\mu n} \varphi)\vert_{\partial\Sigma}=0$,
e.g., for all $\varphi$ with compact support in the
interior of $\Sigma$, we have
\begin{equation}
\label{eq:SRT-GaussTheoremSpatial-2}
\int_\Sigma T^{\mu n}\partial_n\varphi\,d^3x=0\,.
\end{equation}
If we forget about the index $\mu$ for a moment, 
this is just the well-known statement that a
divergence-free vector field is 
$L^2(\reals^3,d^3x)$-orthogonal to all gradient fields (given sufficient fall-off
conditions in the case of non-compact support). 

Now, for the special case $\varphi=x^m$ this reduces
to \eqref{eq:SRT-VanishingIntegrals}, provided 
that the boundary integral in 
\eqref{eq:SRT-GaussTheoremSpatial-1} still vanishes. 
This poses further conditions on $\EMT$ regarding 
its behaviour near $\partial\Sigma$. 
An obvious sufficient condition would be that the 
intersection of $\EMT$'s support
\begin{equation}
\mathrm{supp}(\EMT):=
\overline{\bigl\{p\in M : \EMT(p)\ne 0\bigr\}}
\end{equation}
with $\Sigma$ is compact. On the other hand, if 
$\mathrm{supp}(\EMT)\cap\Sigma$ is not compact, 
it would be sufficient that $\EMT$ has a $1/r^{3+\epsilon}$ fall-off (i.e. faster than $1/r^3$; 
$\epsilon>0$) at each end\footnote{``Ends'' of 
manifolds are connected subsets 
not contained in any compact connected subset. 
This notion was coined in \cite{Freudenthal:1931}. 
They are sometimes called ``asymptotic regions'' 
in the physics literature.} of $\Sigma$, where $r$
 denotes spatial geodesic distance. Note that 
for radiating systems $\Sigma$ needs to be a Cauchy
 surface in order for the requirement of compact 
support to make sense. The reason for this is 
explained in Fig.\,\ref{fig:ConformalMinkowskiSpace}.

\begin{figure}[ht!]
\begin{center}
\includegraphics[width=0.71\linewidth]%
{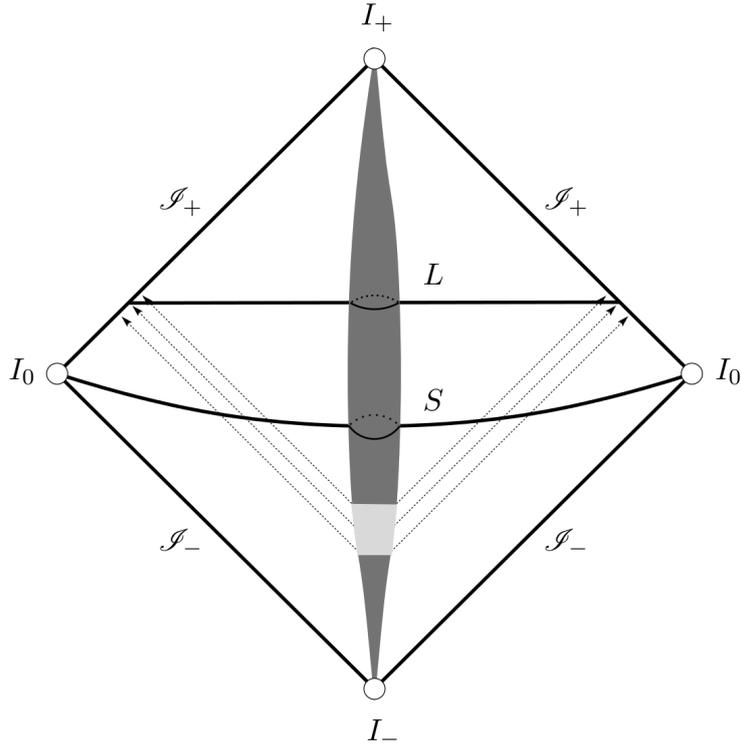}
\end{center}
\caption{\label{fig:ConformalMinkowskiSpace}
The figure shows the conformal compactification
of Minkowski space, the boundary of which decomposes into 
future/past timelike infinity $I_+/I_-$, future/past 
lightlike infinity $\mathscr{I}_+/\mathscr{I}_-$,
and spacelike infinity $I_0$. The vertical dark-shaded 
region extending from $I_-$ to $I_+$ corresponds to the 
``world tube'' of the physical system whose energy-momentum
tensor is denoted by $\EMT$ in the text. The lighter shaded 
region within that denotes the events during which the 
system radiates. The dotted straight lines at 45 degrees 
connecting that lighter shaded region to $\mathscr{I}_+$ 
indicate the lightlike rays of radiation. We have drawn 
two spacelike hypersurfaces: One denoted 
by $S$, which is asymptotically flat 
and extends all the way to $I_0$, and another one, 
$L$, which is asyptotically hyperboloidal (of constant 
negative sectional curvature) and which extends to 
$\mathscr{I}_+$. The former is a Cauchy hypersurface, i.e.
its domain of dependence is all of Minkowski space,
the other is not. For the latter it is clear that the 
radiation that the system released during the finite 
time interval (light shaded region) in the finite past 
intersects an open neighbourhood in $L$ of 
$L\cap\mathscr{I}_+$. This is not true for $S$ unless 
the system had been radiating all the way down to the 
infinite past $I_-$. This is to show that for systems 
capable of radiating off energy and momentum it is 
generally inconsistent to require 
$\mathrm{supp}(\EMT)\cap\Sigma$ to be compact for 
spacelike $\Sigma$, unless $\Sigma$ is a Cauchy 
surface.}
\end{figure}

The foregoing discussion shows the following 
\begin{theorem}[Laue's theorem, classical version]
\label{thm:LauesTheoremClassic}
Let $\EMT$ be a symmetric second-rank contravariant
tensor in Minkowski space $(M,\eta)$ -- from now on 
called the \emph{energy-momentum tensor} --, which is 
such that there exists a global affine chart
$\{x^0,x^1,x^2,x^3\}$ such that $\partial T^{\alpha\beta}/\partial x^0=0$. 
Let $\Sigma\subset M$ be the spacelike hyperplane
$x^0=0$, or a bounded subdomain thereof, which we fix 
once and for all. Consider now the four intgrals  
\eqref{eq:SRT-FourMomentum}, which we write 
$P^\alpha[\Sigma,\EMT]$, and the corresponding four 
integrals for the Lorentz-boosted energy-momentum 
tensor $\Lambda\cdot\EMT:=\Lambda\otimes\Lambda (\EMT\circ \Lambda^{-1})$. Then
\begin{equation}
\label{eq:SRT-LauesThmClassic1}
P^{\alpha}[\Sigma\,,\,\Lambda\cdot\EMT]=\Lambda^\alpha_\beta\, P^\beta[\Sigma\,,\,\EMT]
\end{equation}
for all pure boost transformations, if and only if the 
nine space integrals \eqref{eq:SRT-VanishingIntegrals}
vanish. A sufficient condition for this to happen is that  
$\EMT$ is conserved in the sense of 
\eqref{eq:EM-ConservationSRT} \emph{and} either 
$\EMT\big\vert_{\partial\Sigma}\equiv 0$ (in case 
$\partial\Sigma\ne\emptyset$) or $\EMT$ falls off like 
$1/r^{3+\epsilon}$ at each end of $\Sigma$,
where $r$ is the geodesic distance with respect to  
any reference point inside $\Sigma$.
\end{theorem}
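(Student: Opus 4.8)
The plan is to assemble the statement from the two explicit computations carried out above, in Sections~\ref{sec:GlobalEM-Trans} and~\ref{sec:LaueClassical}, splitting the proof into the equivalence and the sufficient condition.

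For the equivalence I would first take the boost~\eqref{eq:SRTPureBoost} along $x^1$. Under the hypothesis $\partial_0 T^{\alpha\beta}=0$ the derivation leading to~\eqref{eq:SRT-FourMomentumTrans} applies unchanged: inserting the component transformation laws~\eqref{eq:SRT-EMT-Boost-Trans}, evaluated on $\Sigma=\{x^0=0\}$, into the defining integrals~\eqref{eq:SRT-FourMomentum}, and performing the substitution $x^1\mapsto\gamma x^1$ (which supplies the Jacobian factor $1/\gamma$), produces exactly~\eqref{eq:SRT-FourMomentumTrans}. Comparing this term by term with the defining four-vector law~\eqref{eq:SRT-FourVectorTrans} and demanding equality \emph{for every} $\beta$, one matches coefficients of the resulting low-degree polynomials in $\beta$ and finds that this is possible precisely when $P^1=0$ and $\int_\Sigma T^{1a}\,d^3x=0$ for $a=1,2,3$; conversely, once these integrals vanish the extra terms in~\eqref{eq:SRT-FourMomentumTrans} disappear and it collapses onto~\eqref{eq:SRT-FourVectorTrans}. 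Running the same computation for boosts along $x^2$ and $x^3$ and using the symmetry~\eqref{eq:EM-Symmetry} to identify $\int T^{mn}$ with $\int T^{nm}$, the three coordinate directions together yield exactly the nine conditions~\eqref{eq:SRT-VanishingIntegrals}; equivalently, these say that the spatial vector $\int_\Sigma T^{0m}\,d^3x$ (which is just $P^m$ by~\eqref{eq:SRT-FourMomentum}) and the symmetric spatial tensor $\int_\Sigma T^{mn}\,d^3x$ both vanish, a frame-independent statement, so that the same rescaling argument run along an arbitrary boost axis shows the four-vector law then holds for \emph{all} pure boosts. This proves the ``if and only if''.

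For the sufficient condition I would begin with the Leibniz identity~\eqref{eq:SRT-LaueThmProof}. Conservation~\eqref{eq:EM-ConservationSRT} and stationarity~\eqref{eq:SRT-Stationarity} give $\partial_n T^{\mu n}=\partial_\beta T^{\mu\beta}-\partial_0 T^{\mu 0}=0$, so~\eqref{eq:SRT-LaueThmProof} reduces to $\partial_n(T^{\mu n}\varphi)=T^{\mu n}\partial_n\varphi$. Integrating over $\Sigma$ and applying Gau{\ss}' theorem gives~\eqref{eq:SRT-GaussTheoremSpatial-1}, and the choice $\varphi=x^m$ turns its left-hand side into $\int_\Sigma T^{\mu m}\,d^3x$. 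It then remains to see that the boundary term $\int_{\partial\Sigma}T^{\mu n}x^m\nu_n\,do$ vanishes. If $\partial\Sigma\ne\emptyset$ this is immediate from $\EMT\big\vert_{\partial\Sigma}\equiv 0$; if $\Sigma$ has ends, one estimates the flux through a coordinate sphere of radius $r$: its area grows like $r^2$ while the integrand is bounded by $r\cdot r^{-3-\epsilon}=r^{-2-\epsilon}$, so the boundary integral is $O(r^{-\epsilon})\to 0$. Hence~\eqref{eq:SRT-VanishingIntegrals} holds and the first part applies.

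The only genuinely delicate point is this last fall-off estimate, and what matters there is that the decay be \emph{strictly} faster than $1/r^3$: a bare $1/r^3$ rate would leave the boundary integral $O(1)$, since the linear growth of $x^m$ and the quadratic growth of the surface element together cancel a full factor $r^3$. A secondary subtlety, as noted earlier in Section~\ref{sec:GlobalEM-Trans}, is that pure boosts do not form a subgroup, so the three coordinate directions must be treated individually rather than generated by composing a single one-parameter family; this is exactly what promotes the three conditions obtained from a single boost to the full set of nine.
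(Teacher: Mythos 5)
Your proposal is correct and follows essentially the same route as the paper: the equivalence is obtained exactly as in Section~\ref{sec:GlobalEM-Trans} by inserting the boost-transformed components \eqref{eq:SRT-EMT-Boost-Trans} into \eqref{eq:SRT-FourMomentum}, rescaling $x^1\mapsto\gamma x^1$, comparing with \eqref{eq:SRT-FourVectorTrans}, and repeating for the other two axes, while the sufficiency is the argument of Section~\ref{sec:LaueClassical} via \eqref{eq:SRT-LaueThmProof}, Gau{\ss}' theorem, and $\varphi=x^m$. Your explicit $O(r^{-\epsilon})$ estimate for the boundary flux and the remark on why the nine conditions require three separate boost directions are faithful elaborations of points the paper leaves implicit.
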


The classic literature on special relativity is full of more or 
less convincing derivations of this result. Laues original 
derivation \cite{Laue:1911b} and all (as far as I am aware) of its followers, even the most recent \cite{Wang:2015}, replace our hypothesis of stationarity, i.e., 
$\partial T^{\alpha\beta}/\partial x^0=0$, by the stronger hypothesis of staticity, which in addition to 
stationarity also contains the three conditions 
$T^{0m}=0$. Systems satisfying this condition of 
staticity, as well as local energy-momentum
conservation \eqref{eq:EM-ConservationSRT}, are called 
\emph{complete static systems} by Laue. For them Laue 
derives the result that the integrals 
\eqref{eq:SRT-FourMomentum} form the components of 
a ``four-vector''. We shall clarify the mathematical 
content of this statement below.  

Many text-book derivations suffer from inaccuracies,
taking sufficient conditions also as necessary, or unduly 
neglecting the role of the boundary integrals. Quite 
recently, a pedagogically guided attempt has been made to 
bring some logical order into the various classic textbook 
statements of what Laue's theorem actually says, as well 
as the corresponding derivations\,\cite{Wang:2015}. 
Unfortnunately the derivation given in \cite{Wang:2015}
is itself logically incomplete\footnote{The incompleteness 
occurs in the third line of the string of equations (6)
on p.\,1471 of \cite{Wang:2015},
where the integration domain is changed from a hyperplane 
$t'=const$ to a tilted hyperplane $t=const$. This step in \cite{Wang:2015} does not follow from the initial 
hypothesis of $t$-independence, unless some form of local energy-momentum conservation is imposed in addition, 
which at this point would amount to a \emph{petitio principii}.} and entirely phrased in a non-geometric 
component language, so that it remains entirely unclear 
what becomes of Laue's theorem in curved spacetimes. 

Finally we stress once more that the requirement 
\eqref{eq:SRT-LauesThmClassic1} should not be confused 
with that in which $\Sigma$ is also acted on by 
$\Lambda$:
\begin{equation}
\label{eq:SRT-LauesThmClassic1-Fake}
P^{\alpha}[\Lambda\cdot\Sigma\,,\,
\Lambda\cdot\EMT]=\Lambda^\alpha_\beta\, P^\beta[\Sigma\,,\,\EMT]\,,
\end{equation}
where $\Lambda\cdot\Sigma$ denotes the image
of the hyperplane $\Sigma$ under the (active!) Lorentz 
transformation $\Lambda$.
This equation is trivially satisfied for all $\EMT$, 
as one easily sees by a simple change-or-variable 
transformation of the integral and the fact that 
now not only the components $T^{\alpha\beta}$ but 
also the $n_\beta$ transforms so that 
$T^{\alpha\beta}n_\beta$ already transforms as a 
four vector. Hence no further conditions on 
$\EMT$ result. Equation \eqref{eq:SRT-LauesThmClassic1-Fake} 
is a trivial requirement that merely states that 
$P^\alpha$ has been constructed from geometric 
objects. But this is not the point!
Rather, the point is whether the quantities 
$P^\alpha$ behave like a four vector under boosts 
which exclusively act on the matter fields so 
as to \emph{change} their state relative to a 
\emph{fixed} background structure.  The difference 
between \eqref{eq:SRT-LauesThmClassic1} and  
\eqref{eq:SRT-LauesThmClassic1-Fake} is easily 
overlooked if one adopts a passive interpretation 
of Lorentz transformations, in which case one is 
tempted to let $\Lambda$ act on \emph{any} geometric 
structure that appears in one's formulae (``anything that 
has indices on it is transformed''). This confusion 
has led to several fake-resolutions of the 
transformation problem in the classic literature, 
like e.g. in \cite{Rohrlich:ClassicalChargedParticles} 
and \cite{Rohrlich:TheElectron1973}.

\subsection{The many uses of Laue's theorem}
\label{sec:UsesLaueTheorem}
Before we outline the geometric theory, let us say a few words
on the many uses of Laue's theorem. Laue himself applied it to 
many of the at first sight paradoxical results in special-relativistic 
kinematics, most importantly to the null results of the classic 1903
experiment by Trouton and Noble \cite{Laue:1912}. All these apparent 
paradoxa result from applications of \eqref{eq:SRT-FourMomentumTrans}
to energy-momentum tensors violating the hypotheses of Laue's theorem,
so that the space integrals in these equations do not vanish. As an easy 
first example consider the case when the space integrals in 
\eqref{eq:SRT-FourMomentumTrans-c} (i.e. over the longitudinal-transversal 
spatial components of $\EMT$) do not vanish. This implies that as a 
result of a pure boost the systems picks up momentum perpendicular to the 
boost direction. Momentum and velocity of such a 
boosted system will no longer be aligned. This also means that putting such a 
system into the boosted state will necessitate the transmission of 
angular momentum, i.e. the action of torque. Another, historically 
earlier apparent paradox concerns the Coulomb field of charge distributions. 
Assuming spatial isotropy, the tracelessness of the electromagnetic
energy-momentum tensor ($T^{00}=T^{11}+T^{22}+T^{33}$)  gives 
$\int_\Sigma T^{11}=\frac{1}{3} P^0$. Hence, even if $P^1=0$, we have 
${\bar P}^0=\frac{4}{3}\gamma P^0$ and ${\bar P}^1=\frac{4}{3}\beta\gamma P^0$,
which looks all right except for the factor $4/3$. Systems to which this applies
are, e.g., the Coulomb field outside a spherically-symmetric surface charge
distribution (like in ``classical models of the electron''; see, e.g., \cite{Rohrlich:TheElectron1973}), or the black-body radiation inside a 
container. The reason why all these examples violate the hypotheses of
Laue's theorem is, in Laue's terminology, that they are not ``complete'' 
in the following sense: They either do not satisfy  \eqref{eq:EM-ConservationSRT} 
or, if the points of $\Sigma$ where \eqref{eq:EM-ConservationSRT} is violated 
are removed from $\Sigma$, $\EMT\big\vert_{\partial\Sigma}\ne 0$, i.e. the
boundary conditions are not met. 
For example, including the stresses that are needed to prevent the charge distribution of
the ``classical electron'' (so-called Poincar\'e stresses, because in
\cite{Poincare:1906} Poincar\'e already pointed out the necessity of such 
- possibly non-electromagnetic - stresses) from exploding, or the black-body radiation from escaping the cavity, removes all these paradoxa in a general and 
model-independent fashion. Therein, as well as in its universal applicability, lies the strength of Laue's theorem.  

We end this section by a more modern application of Laue's theorem.
Let us consider the question of whether an electromagnetically bound 
system, like a large molecule, obeys a simple version of Einstein's 
equivalence principle, according to which the system's (centre of mass)
acceleration in a given static and weak gravitational field is independent 
of internal energies. As we restrict to weak and static fields, we write
for the components of the space-time metric 
\begin{equation}
\label{eq:WeakART-1}
g_{\alpha\beta}=\eta_{\alpha\beta}+h_{\alpha\beta}\,,
\end{equation}
where $\{\eta_{\alpha\beta}\}=\mathrm{diag}(1,-1,-1,-1)$ represents
the flat Minkowski background and $h_{\alpha\beta}$ deviations from it,
which are considered small, which means higher powers than the first 
of $h_{\alpha\beta}$, as well as of its first two derivatives, are neglected.
The linearized Einstein equations for static Newtonian sources (where only 
the $00$ component of the energy-momentum tensor is assumed non-zero) then 
give, as is well known, 
\begin{equation}
\label{eq:Weak-ART-2}
h_{\alpha\beta}(\vec x)=\delta_{\alpha\beta}\
\frac{2\Phi(\vec x)}{c^2}\,.
\end{equation}
Here $\Phi$ is the standard Newtonian potential and  
$\delta_{\alpha\beta}$ the Kronecker delta (not the Minkowski metric!).
The coupling of a the weak gravitational field $h_{\alpha\beta}$
to the energy momentum tensor of our system (molecule) is given by 
the interaction Lagrangian
\begin{equation}
\label{eq:Weak-ART-3}
L_{\mathrm{int}}=\frac{1}{2}\int_\Sigma d^3x\ T^{\alpha\beta}\,h_{\alpha\beta}=\frac{1}{c^2}\int_\Sigma d^3x\ \Phi
\bigl(T^{00}+T^{11}+T^{22}+T^{33}\bigr)\,.
\end{equation}
If, in addition, we assume that the external gravitational potential
$\Phi$ is approximately constant over the spatial support of 
$T^{\alpha\beta}$, and that the  spatial support at time $t=0$ is 
centred about the spatial position $\vec z$ (e.g. some Newtonian centre 
of mass), we get 
\begin{equation}
\label{eq:Weak-ART-4}
L_{\mathrm{int}}(\vec z)=\Phi(\vec z)\
\int_\Sigma d^3x\ \bigl(T^{00}+T^{11}+T^{22}+T^{33}\bigr)/c^2\,.
\end{equation}
It is clear now that if $\EMT$ satisfies the hypotheses 
of Laues theorem only the first integral survives and 
$L_{\mathrm{int}}(\vec z)$ equals the \emph{inertial mass} of 
the system times the Newtonian potential. This implies the 
\emph{weak equivalence principle} since it explicitly 
demonstrates the equality of the inertial and passive 
gravitational mass (i.e. that mass to which the interaction
with an external gravitational field is proportional to).
Here we note in passing that the combination 
$T^{00}+T^{11}+T^{22}+T^{33}$ equals 
$2(T^{\alpha\beta}-\frac{1}{2}\eta^{\alpha\beta}\eta_{\mu\nu}T^{\mu\nu})n_\alpha n_\beta$ and corresponds to the integrand of 
the so-called Tolman mass \cite{Tolman:1930a}. We refer to 
section\,9 of the arXiv version of
\cite{Giulini:SpringerHandbookSpacetime} for a discussion 
of how the Tolman mass relates to other mass concepts, like 
the Komar mass.

Let us for the moment forget about Laue's theorem and proceed
further in analyzing \eqref{eq:Weak-ART-4}. Our material system 
that is subject to the weak gravitational field consists of 
charged particles interacting via their Coulomb interactions
(we shall neglect radiation). Hence $\EMT=\EMT_1+\EMT_2$, 
with subscript~1 referring to the particles and subscript~2 
to the electromagnetic field. For a particle at rest the 
space integral of $T_1^{00}$ equals $m_0c^2$ and all
other components of vanish. From \eqref{eq:SRT-EMT-Boost-Trans}
we then infer that a single slowly moving particle (in $x^1$ 
direction) has to leading order in $v/c$ a $T_1^{00}$ 
space integral of $m_0c^2+E_{\mathrm{kin}}$, a $T_1^{11}$ space 
integral of $2E_{\mathrm{kin}}$, and all other diagonal components 
vanishing (to leading order). The sum over all particles 
then gives, to leading order in $v/c$,
\begin{equation}
\label{eq:Weak-ART-5}
\int_\Sigma d^3x\,
T_1^{00}=M_0c^2+E_{\mathrm{kin}}\,,\qquad
\int_\Sigma d^3x\,
\delta_{ab}T_1^{ab}=2\,E_{\mathrm{kin}}\,,
\end{equation}
where $M_0$ is the sum of all rest-masses over all particles 
and $E_{\mathrm{kin}}$ is the sum of all kinetic energies over all 
particles. 

On the other hand, the electromagnetic field has a trace-free
energy momentum tensor and hence 
\begin{equation}
\label{eq:Weak-ART-6}
\int_\Sigma d^3x\,
T_2^{00}=U\,,\qquad
\int_\Sigma d^3x\,
\delta_{ab}T_2^{ab}=U\,,
\end{equation}
where $U$ is the total energy stored in the electromagnetic 
field, which we here identify with the total Coulombian 
binding energy, after having subtracted the diverging 
Coulombian self-energy of each particle (compare 
footnote~\ref{footnote:ChargeDist} below).

Hence we get as passive gravitational mass for our molecular 
matter $M_0+(3\,E_{\mathrm{kin}}+2U)/c^2$. But we would have expected
$M_0+(E_{\mathrm{kin}}+U)/c^2$ and now wonder what the excess of 
$(2E_{\mathrm{kin}}+U)/c^2$ might mean. Well, from our derivation we 
know that this combination corresponds to the space integral 
of $\delta_{ab}(T_1^{ab}+T_2^{ab})$, which has to vanish if 
$\EMT=\EMT_1+\EMT_2$ satisfies the hypotheses of Laue's 
theorem (as we assume). But suppose we had never heard of 
Laue's theorem, what could have rescued us from concluding that 
there is something seriously wrong with the equivalence 
principle? Precisely this situation came up in an investigation 
of Carlip's \cite{Carlip:1998}, who concluded that it is the 
virial theorem that rescues us. Recall that in the low-velocity (``non-relativistic'') approximation with homogeneous potentials 
of degree $-1$, that is, any combination of attractive and
repulsive $1/r$ potentials, the virial theorem just implies that 
$2E_{\mathrm{kin}}+U=0$ for the time-averages of kinetic and potential 
energies.%
\footnote{\label{footnote:ChargeDist}%
It is important here to note the following point:
In electrostatics, where (in SI-units) 
$\varepsilon_0\Delta\phi=-\rho$, $\phi$ being the scalar
potential satisfying $\vec E=-\vec\nabla\phi$ and $\rho$ 
being the charge distribution, the electric interaction energy 
for any charge distribution $\rho$ of compact support 
can be written in the following equivalent forms:
\begin{equation}
\label{eq:SelfEnergyRegular}
E_{\mathrm{int}}
=\frac{\varepsilon_0}{2}\int d^3x \Vert\vec E(\vec x)\Vert^2
=\frac{1}{8\pi\varepsilon_0}
\iint d^3 x\,d^3 x'
\frac{\rho(\vec x)\rho(\vec x')}{\Vert\vec x-\vec x'\Vert}\,.
\end{equation}
The first expression shows that  $E_{\mathrm{int}}$ is always positive and cannot possibly 
be identified with $U$ in the virial theorem, where due to 
$2E_{\mathrm{kin}}+U=0$ and $2E_{\mathrm{kin}}\geq 0$ we clearly must 
have $U\leq 0$. On the other hand, if we tried to apply 
\eqref{eq:SelfEnergyRegular} to a number $n$ of point-particles, where  
$\rho(\vec x)=\sum_{a=1}^n q_a\delta^{(3)}(\vec x-\vec x_a)$, 
$E_{\mathrm{int}}$ would clearly diverge due to the self-interaction 
of each individual particle. The second expression in 
\eqref{eq:SelfEnergyRegular} then shows that after 
subtraction of each diverging part the remaining finite 
part is 
\begin{equation}
\label{eq:SelfEnergyParticlesRegularised}
E_{\mathrm{int}}^{\mathrm{(finite)}}
=\frac{1}{4\pi\varepsilon_0}\sum_{a<b}
\frac{q_aq_b}{\Vert\vec x_a-\vec x_b\Vert}\,,
\end{equation}
where the sum runs over all $\frac{1}{2}n(n-1)$ possible 
combinations for $a,b\in\{1,\cdots, n\}$ 
for which $a<b$. That regularised expression 
$E_{\mathrm{int}}^{\mathrm{(finite)}}$ may well assume negative 
values for a mixture of positively and negatively charged 
particles and it is that expression that we may identify 
with $U$ in the virial theorem for such mixed collections 
of charged particles; compare \S\,34 of \cite{Landau.Lifshitz:Vol2}.}
Indeed, that is how we could have concluded without invoking 
Laue's theorem. But I think the foregoing discussion has made 
it clear how elegant and forceful Laue's theorem really is. 
In fact, there is a relation between the derivation of Laue's 
theorem given above and the relativistic virial theorem given 
in  \cite{Landau.Lifshitz:Vol2}, in the sense that the virial 
theorem is a proper implication of Laue's theorem, in that 
it only states the vanishing of the spatial integral of the 
spatial trace of $\EMT$, not the vanishing of the spatial 
integrals of each component of $\EMT$ separately (except $T^{00}$).

\section{Geometric theory}
\label{sec:GeometricTheory}
With this second section, which also contains six
subsections, we turn to the second half of our 
programme as outlined in the introduction. 
Having discussed Laue's theorem and its physical 
relevance in the first section, we now wish 
to know how to formulate it  in a modern
 differential-geometric language. Such a formulation 
should in particular make clear the geometric 
hypotheses underlying it. These hypotheses are not at 
all obvious from the component-based ``derivation'' 
given above, which uses integrals over tensor components 
and identities like \eqref{eq:SRT-LaueThmProof}, 
none of which make any proper geometric sense a priori,
and only receive their restricted meaning in the 
context of SR through the affine and metric structure 
of Minkowski space. In this section we shall give a proper 
geometric statement and proof of a generalisation of 
Laue's theorem that is valid outside SR and reduced to
the classical statement given above if suitably 
specialised. Nothing of what we are now going to 
say depends crucially on the number $n$ of dimensions 
of $M$, so that we shall leave it open. $g$ is then a
 Lorentzian metric on $M$ of signature $(1\,,\,n-1)$, 
i.e. ``mostly minus'' (compare 
Appendix~\ref{sec:Appendix-LorentzianManifolds}). 
Indices are now written in the latin alphabet ranging 
from $0$ to $n-1$.  

\subsection{Alternative representation of 
energy-momentum distributions}
\label{sec:AltRepEMD}
Instead of $\EMT\in\Section(TM\vee TM)$ we define a new object 
by pulling down the first index on $\EMT$ and Hodge-dualising 
on the second:  
\begin{equation}
\label{eq:DefEM-Form}
\EMform:=T_{ab} \theta^a\otimes \star \theta^b
\in\Section\left(T^*M\otimes\Wedge^{n-1} T^*M\right)\,.
\end{equation}
$\EMform$ can be considered as a $T^*M$-valued $(n-1)$-form, or 
equivalently, as an $(n-1)$-form-valued linear map on
$TM$. 

We assume $\EMT$ to be ``complete'' in the sense of Laue's,
by which we mean \eqref{eq:EM-Conservation}, i.e. the 
vanishing of the vector field $\nabla\cdot\EMT:=\nabla_b T^{ab}\,\partial/\partial x^a$ (in $n$ dimensions latin 
indices range from $0$ to $n-1$). Using the exterior 
covariant derivative $D$, the following identity holds: 
\begin{equation}
\label{eq:ExtCovDerEMT-Form}
D\EMform=(\nabla\cdot\EMT)^\flat\otimes\varepsilon\,.
\end{equation} 
Here $\varepsilon$ is the volume $n$-form associated to
$g$ and $\flat$ denotes again the ``index-lowering isomorphism'' defined by $g$
(compare  \eqref{eq:DefMusicalIsomorphisms}).

Let us give a proof of \eqref{eq:ExtCovDerEMT-Form} in 
component language. For that we recall that  the exterior 
covariant derivative $D$ of an $r$-form $F$ with values in 
a vector bundle $V$ over $M$, i.e. of $F\in\Section\bigl(V\otimes\Wedge^rT^*M\bigr)$, 
is obtained from an ordinary torsion-free  covariant 
derivative $\nabla:\Section\bigl(V\otimes\Wedge^rT^*M\bigr)
\rightarrow\Section\bigl(V\otimes T^*M\otimes\Wedge^rT^*M\bigr)$ 
(which in our case is the Levi-Civita connection) by 
total antisymmetrisation in $T^*M\otimes\Wedge^rT^*M$, 
so  as to obtain a map
$D:\Section\bigl(V\otimes\Wedge^rT^*M\bigr)\rightarrow
\Section\bigl(V\otimes \Wedge^{r+1}T^*M\bigr)$. More precisely, 
just writing the form-components (and keeping the 
$V$-valuedness implicit), we have 
$(DF)_{a_0\cdots a_r}=(r+1) \nabla_{[a_0}F_{a_1\cdots a_r]}$. 
Here the  antisymmetrisation bracket $[\cdots]$ is as in \eqref{eq:VollsAntismmIndizes}. The reason for the factor 
$(r+1)$ is the very same as for the corresponding formula 
for the ordinary exterior derivative $d$, which in turn 
is just \eqref{eq:WedgeProdCoeff} applied to $p=1$ and 
$q=r$. Hence, using \eqref{eq:StarMapOnBasis1} and \eqref{eq:VolumeFormSquared}
with $n_-=(n-1)$ (recall our ``mostly-minus'' convention
for the signature in $n$ dimensions), we indeed derive   \eqref{eq:ExtCovDerEMT-Form} in component form through
the following lines:
\begin{equation}
\label{eq:ExtCovDer-Comp}
\begin{split}
 (D\EMform)_{ab_1\cdots b_n}
&=n\,g_{ac}\nabla_{[b_1}T^c_{b_2\cdots b_n]}\\
&=n\,g_{ac}\nabla_{[b_1}T^{cb}\varepsilon_{\vert b\vert b_2\cdots b_n]}\\
&=-(n/n!)\,g_{ac}\varepsilon_{b_1\cdots b_n}\varepsilon^{c_1c_2\cdots c_n}
  \nabla_{c_1}T^{cb}\varepsilon_{bc_2\cdots c_n}\\
&=g_{ac}\varepsilon_{b_1\cdots b_n}\nabla_bT^{cb}\,.
\end{split}
\end{equation}

We conclude from \eqref{eq:ExtCovDerEMT-Form} that 
\eqref{eq:EM-Conservation} is equivalent to 
$\EMform$ having vanishing exterior covariant 
derivative: $D\EMform=0$. However, this does not yet 
define a conservation law. For that we need an ordinary  
$d$-closed $(n-1)$-form. The recipe to get this from 
a $D$-closed $T^*M$-valued $(n-1)$-form is to contract 
the value of the latter with a vector field  
$K\in\Section(TM)$; for that contraction we write 
\begin{equation}
\label{eq:EMform-K}
\EMform_K:=i_K\EMform\,.
\end{equation}
Using  that $D$ and $d$ coincide on ordinary forms, a 
straightforward calculation, very similar indeed to 
that in \eqref{eq:ExtCovDer-Comp}, then shows 
\begin{equation}
\label{eq:ExtCovDer-Current}
d\EMform_K=D(i_K\EMform)=i_K(D\EMform)+\EMT(\nabla K^\flat)\,\varepsilon\,.
\end{equation}
Here the function $\EMT(\nabla K^\flat)$ is the contraction 
of $\EMT\in\Section(TM\vee TM)$ with 
$\nabla K^\flat\in\Section(T^*M\otimes T^*M)$, i.e. 
in components $T^{ab}\nabla_a K_b$. Hence, given symmetry 
\eqref{eq:EM-Symmetry} and $D\EMform=0$ (i.e. \eqref{eq:EM-ConservationSRT}), $\EMform_K$ is closed for general $\EMT$, if and only if 
the symmetric part of $\nabla K^\flat$ vanishes. Now, as
$\nabla$ is the Levi-Civita covariant derivative with respect to 
$g$, we have the well known formula that twice the symmetric 
part of $\nabla K^\flat$ equals $L_Kg$, the Lie derivative of 
$g$ with respect to $K$; in components\footnote{This formula has 
additional terms involving the non-metricity tensor 
$Q:=\nabla g$ and the torsion for general connections.}
\begin{equation}
\label{eq:KillingField}
\nabla_aK_b+\nabla_bK_a=(L_kg)_{ab}\,.
\end{equation}
In other words, given \eqref{eq:EM-Symmetry} and 
\eqref{eq:EM-ConservationSRT}, the $(n-1)$-form $\EMform_K$ 
is closed if and only if $K$ is a Killing vector field, i.e. 
a vector field whose flow is by isometries of $(M,g)$.
It is at this point that both conditions \eqref{eq:EM-Symmetry}
and \eqref{eq:EM-ConservationSRT} enter our discussion 
and further consequences in an essential fashion. 

\subsection{On the notions of ``charges'' and 
``conservation''}
\label{sec:ChargesConservation}
In physics the closedness of the $(n-1)$-form 
$\EMform_K$ is often expressed synonymously by ``conservation'' (i.e. divergencelessness)  
of the ``current'' (i.e. vector field) 
\begin{equation}
\label{eq:EMCurrent}	
 J_K:=(\star\EMform_K)^\sharp
=K_a T^{ab}\partial/\partial x^b\,.
\end{equation}
Then  
\begin{equation}
\label{eq:ClosedEMForm}
d\EMform_K=0\Leftrightarrow \nabla\cdot J_K=0\,.
\end{equation}
Note that $\sharp$ is the inverse of $\flat$ (compare 
\eqref{eq:DefMusicalIsomorphisms}) and also that 
$\star\circ\star$ is the identity on one- and 
$(n-1)$-forms according to \eqref{eq:StarSquared} 
and \eqref{eq:VolumeFormSquared} for mostly-minus 
Lorentz signatures. 

The physical concepts of charges and their conservation  
directly derive from formulae like \eqref{eq:ClosedEMForm}
as follows: Integrate $0=d\EMform_K$ over a $n$-dimensional 
submanifold  $\Omega\subset M$ with piecewise smooth boundary  
$\partial\Omega=\Sigma_1\cup\Sigma_2\cup B$, where $B$ 
is such that $\EMform_K$ restricted to $B$ vanishes; then,
by Stokes' theorem,  
\begin{equation}
\label{eq:EMForm-Conservation}
\int_{\Sigma_1}\EMform_K
=-\int_{\Sigma_2}\EMform_K
=\int_{-\Sigma_2}\EMform_K\,.
\end{equation}
Here $(-\Sigma_2)$ denotes $\Sigma_2$ endowed with opposite 
orientation. If, as in the usual argument,  $\Sigma_1$ 
and $\Sigma_2$ are two spacelike submanifolds whose 
intersection with $\mathrm{supp}(\EMT)$ is compact and  
$B$ is a timelike cylinder connecting the boundaries of
$\Sigma_1$ and $\Sigma_2$ such that 
$B\cap\mathrm{supp}(\EMT)=\emptyset$, then in physics we 
say that the ``charge'' associated with the conserved current 
$J_K$, which is defined to be the flux of $J_K$ through 
the hypersurface, is \emph{conserved} in the sense of 
being independent of the spacelike hypersurface it is 
integrated over. Note that the flux of $J_K$ is obtained 
by integrating $\star(J_K^\flat)$.

We summarise all this in the following definition,
in which we drop the subscript $K$ on $J$ since 
it applies to all $J\in\Section(TM)$, independently 
of whether or not they derive from $\EMT$ by the 
construction above.
\begin{definition}
\label{def:HomologousModulo}
Let $\Omega\subset M$ be a $n$-dimensional 
submanifold with piecewise smooth boundary 
$\partial\Omega=\Sigma_1\cup\Sigma_2\cup B$ 
such that $B\cap\mathrm{supp}(J)=\emptyset$,
then we call $\Sigma_1$ and  $\Sigma_2$
\emph{homologous modulo $\mathrm{supp}(J)$}.
\end{definition}
\begin{definition}
\label{def:Charge}
Let $J\in\Section(TM)$ be any vector field and  
$\Sigma\subset M$ an $(n-1)$-dimensional oriented 
submanifold. Then we call 
\begin{equation}
\label{eq:DefCharge}
Q(\Sigma,J):=\int_\Sigma\star J^\flat
\end{equation}
the \emph{flux of $J$ through $\Sigma$} or, equivalently,
the \emph{charge of $J$ at $\Sigma$}.
\end{definition}
\begin{corollary}
\label{thm:GaussOnFluxes}
If $J\in\Section(TM)$ is divergenceless (is a 
``conserved current'') and $\Sigma_1$ and $\Sigma_2$ 
are homologous modulo $\mathrm{supp}(J)$, that is, 
if there is an $n$-dimensional submanifold 
$\Omega\subset M$ with piecewise smooth boundary 
$\partial\Omega=\Sigma_1\cup\Sigma_2\cup B$ where 
$\mathrm{supp}(J)\cap B=\emptyset$, then 
\begin{equation}
\label{eq:GaussOnFluxes}
Q(\Sigma_1,J)=-Q(\Sigma_2,J)=Q(-\Sigma_2,J)\,,
\end{equation}
where $-\Sigma_2$ stands for $\Sigma_2$ endowed with 
opposite orientation.
\end{corollary}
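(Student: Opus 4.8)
The plan is to reduce the statement to a single application of Stokes' theorem for the ordinary $(n-1)$-form $\star J^\flat$, in complete parallel to the computation leading to \eqref{eq:EMForm-Conservation}.

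First I would record the auxiliary fact that for an arbitrary vector field $J\in\Section(TM)$ the $(n-1)$-form $\star J^\flat$ is closed if and only if $J$ is divergenceless. This is exactly \eqref{eq:ClosedEMForm} read for general $J$: setting $\EMform_K:=\star J^\flat$ one has $(\star\EMform_K)^\sharp=(\star\star J^\flat)^\sharp=J$, since $\star\circ\star=\id$ on $(n-1)$-forms and $\sharp$ inverts $\flat$, so $d(\star J^\flat)=0\Leftrightarrow\nabla\cdot J=0$. Equivalently, with the conventions fixed in \eqref{eq:SRTInducedMeasure} one has $\star J^\flat=i_J\varepsilon$, whence $d(\star J^\flat)=d\,i_J\varepsilon=L_J\varepsilon-i_J\,d\varepsilon=(\nabla\cdot J)\,\varepsilon$ because $d\varepsilon=0$. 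Under the hypothesis $\nabla\cdot J=0$ we therefore have $d(\star J^\flat)=0$.

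Next I would invoke Stokes' theorem on the $n$-dimensional submanifold $\Omega$ with piecewise-smooth boundary: $\int_\Omega d(\star J^\flat)=\int_{\partial\Omega}\star J^\flat$. The left-hand side vanishes by the previous step. On the right-hand side the boundary decomposes as $\partial\Omega=\Sigma_1\cup\Sigma_2\cup B$, each piece carrying the orientation induced from $\Omega$. Since $\mathrm{supp}(J)\cap B=\emptyset$ by hypothesis, $J$ — hence also $J^\flat$ and $\star J^\flat$ — vanishes identically on $B$, so $\int_B\star J^\flat=0$. What remains is $\int_{\Sigma_1}\star J^\flat+\int_{\Sigma_2}\star J^\flat=0$, the $\Sigma_i$ being taken with their boundary orientations. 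Invoking Definition~\ref{def:Charge} together with the elementary fact that reversing the orientation of a submanifold flips the sign of an integral over it, this is precisely $Q(\Sigma_1,J)=-Q(\Sigma_2,J)=Q(-\Sigma_2,J)$.

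The only genuinely delicate points are bookkeeping ones. One must fix once and for all that the orientations of $\Sigma_1$ and $\Sigma_2$ in Definition~\ref{def:HomologousModulo} are those induced as boundary components of $\Omega$ (outward-normal-first convention), so that the chain $\partial\Omega$ really is $\Sigma_1+\Sigma_2+B$ and no stray sign intrudes; the two physically natural ``future-pointing'' orientations of the spacelike hypersurfaces then differ, which is exactly why one obtains $Q(\Sigma_1,J)=-Q(\Sigma_2,J)$ and not $=Q(\Sigma_2,J)$. The second point is that Stokes' theorem requires either $\Omega$ compact or $\star J^\flat$ to have compact support in $\overline{\Omega}$; since $\mathrm{supp}(J)\cap B=\emptyset$ and $B$ connects the boundaries of $\Sigma_1$ and $\Sigma_2$ inside $\partial\Omega$, it suffices to assume — as is tacit throughout this section, cf.\ the discussion around Fig.~\ref{fig:ConformalMinkowskiSpace} — that $\mathrm{supp}(J)\cap\overline{\Omega}$ is compact, which guarantees convergence of all integrals and legitimacy of the boundary decomposition. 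I would expect this orientation-and-compactness bookkeeping, rather than any analytic difficulty, to be the main thing that needs care.
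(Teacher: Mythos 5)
Your proposal is correct and follows essentially the same route as the paper: the paper's own proof is the one-line remark that the corollary is an immediate consequence of Stokes' theorem, relying on exactly the computation you spell out (closedness of $\star J^\flat$ from $\nabla\cdot J=0$, integration over $\Omega$, and the vanishing of the contribution from $B$), as already laid out around \eqref{eq:EMForm-Conservation}. Your extra remarks on orientation conventions and on compactness of $\mathrm{supp}(J)\cap\overline{\Omega}$ make explicit what the paper leaves tacit, but do not change the argument.
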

\begin{proof}
This is an immediate consequence of Stokes' theorem.
\end{proof}

The charges $Q(\Sigma,J)$ are just numbers associated 
to vector fields $J$ and hypersurfaces $\Sigma$. In
physics, these numbers receive their interpretation 
through the interpretation of $J$ and their 
significance through additional special properties 
of $J$, most importantly its vanishing divergence 
and certain constraints on its support.
Corollary\,\ref{thm:GaussOnFluxes} states to what extent
these numbers are independent of $\Sigma$. This leads to 
 the concept of ``conservation'' in the sense that the 
charges measured at different boundaries are the same.
If one of these boundaries is regarded as time
 evolution of the other, then we obtain the ordinary
notion of charge conservation ``in time''.

\subsection{%
Group actions and the habitat of charges}
\label{sec:GroupActionsAndCharges}
To a large degree independent of the issue of conservation is 
the issue of interpretation. As already stated, the  
interpretation of the charges must follow from the interpretation 
of $J$. In our case, the vector fields $J$ we consider derive 
from energy-momentum tensors $\EMT$ and some preferred vector 
field $V$. Hence the interpretation of charge is essentially 
connected with $V$: Which vector field $V$ are we using 
in order to turn $\EMT$ into a ``current'' 
$J=J_V:=i_V\EMT$, the charge of which we consider? 
That is the question we address in this subsection.
We know that if $V=K$ is Killing then $J_K$ is covariant 
divergenceless and the charges are conserved. But for 
the time being we wish to be independent of that assumption 
and hence shall not assume $V$ to be Killing. 

Vector fields $V\in\Section(TM)$ generate 
diffeomorphisms of $M$, which we may physically 
interpret as ``motions'' if acting on geometric 
objects on $M$ representing physical systems. Of ``motions'' 
we usually think as being composable and forming a 
group. It is the group structure of a set of motions 
that we usually employ in order to interpret it elements. 
For example, in Newtonian mechanics we characterise 
an overall motion of a system of point particles as 
being either a ``translation'' or ``rotation'', depending 
on how this particular motion sits inside the group $E_3$ 
of euclidean motions that we think of as having 
been implemented as acting on space (in this case 
by isometries). In $E_3$ translations are invariantly 
characterised as elements of the maximal abelian 
normal subgroup. That is, translations form a unique 
3-dimensional subgroup in the 6-dimensional group 
$E_3$. In contrast,  there is no unique rotation 
subgroup in $E_3$. Rather, $E_3$ contains a 
3-parameter family of different copies of rotation
subgroups corresponding to the different choices 
of origin in space about which one rotates (that 
is kept fixed unter all rotations). These different 
copies are all related by conjugation with translations 
(shifting the origin). Hence we may not speak of 
``the'' but only of ``a'' rotation subgroup in $E_3$,
each one corresponding to its own origin in space. 

Similar remarks apply to the Poincar\'e group in 
special-relativitsic physics. (Compare 
Appendix~\ref{sec:Appendix-PoincareGroup} and 
\cite{Giulini:EMTaMinSRT2015} for a more detailed 
discussion of the semi-direct product structure of 
the Poincar\'e group in terms of a unique subgroup 
of translations and a non-unique complementary 
subgroup of Lorentz transformations.) Here again 
``translations'' in space and time are 
invariantly  characterised as forming the maximal 
abelian normal subgroup of the Poincar\'e group. 
If the translation is timelike we call the corresponding 
charge ``energy'' and ``linear momentum'' if it is spacelike. 
Charges corresponding to any of the Lorentz subgroups, of which 
there is an $n$-parameter family, are then associated to, e.g., 
angular momentum or centre-of-mass motion.  Again picking 
one of the Lorentz subgroups is equivalent to picking a point 
in spacetime, the ``origin'' left fixed by all Lorentz 
transformations.

Let us now turn to the general case and capture the 
situation by supposing that a finite-dimensional 
Lie group $G$ acts on the left on the manifold $M$. 
The words ``left action'' mean that we have a 
homomorphism of groups ($e\in G$ is the group identity):  
\begin{subequations}
\label{eq:IsometricG-Action}
\begin{equation}
\label{eq:IsometricG-Action-a}
\Phi: G\rightarrow\mathrm{Diff}(M)\,, \quad 
g\mapsto\Phi_g\,,\quad
\Phi_e=\id_M\,,\quad
\Phi_g\circ\Phi_h=\Phi_{gh}\,.
\end{equation}
For a ``right action'' the last equation would be 
replaced by $\Phi_g\circ\Phi_h=\Phi_{hg}\,$, but we shall 
stick to the ``left'' convention. 

Later we will consider special actions $\Phi$ of 
$G$ on $M$ which act by isometries with respect 
to a given metric $g$ on $M$. This means that 
\begin{equation}
\label{eq:IsometricG-Action-b}
\Phi^*_hg=g
\end{equation}
\end{subequations}
for each $h\in G$. But for the time being we shall keep 
the discussion independent of that assumptions as long 
as possible. Hence we stress that all statements below 
do not assume the action to be isometric unless 
explicitly stated so.

Any group-homomorphism between Lie-groups induces 
a homomorphism between the corresponding 
Lie-algebas, which is just given by the differential 
of the former at the group identity. The Lie-algebra 
of $\mathrm{Diff}(M)$ is given by the infinite-dimensional 
vector space $\Section(TM)$ of vector fields on $M$ whose 
Lie bracket is \emph{minus} the vector-field commutator.
Hence we have     
\begin{subequations}
\label{eq:LieAntiHomo}
\begin{equation}
\label{eq:LieAntiHomo-a}
V:\Lie(G)\rightarrow\Section(TM)\,,\quad
V_\xi(p):=\frac{d}{ds}\Big\vert_{s=0}\Phi_{\exp(s\xi)}(p)\,,
\end{equation} 
which satisfies 
\begin{alignat}{2}
\label{eq:LieAntiHomo-b}
-\bigl[V_\xi\,,\,V_\zeta\bigr]&\,=\,V_{[\xi,\zeta]}\,,\\
\label{eq:LieAntiHomo-c}
(\Phi_g)_*V_\xi&\,=\,V_{\mathrm{Ad_g}(\xi)}\,.
\end{alignat}
\end{subequations}
We refer to the Appendix of \cite{Giulini:EMTaMinSRT2015} 
or the appendix of the arXiv version of 
\cite{Giulini:SpringerHandbookSpacetime} for a 
detailed proof. Before we continue, we give $V_\xi$ 
a proper name:   
\begin{definition}
\label{def:FundamentalVectorField}
Given an action $\Phi$ of $G$ on $M$, then 
$V_\xi\in\Section(TM)$ is called the 
\emph{fundamental vector field} for that 
action on $M$ corresponding to $\xi\in\Lie(G)$. 
\end{definition}

Note that the bracket $[\cdot,\cdot]$ on the left-hand 
side of \eqref{eq:LieAntiHomo-b} is the vector-field 
commutator, which is \emph{minus} the Lie-bracket in 
the Lie-algebra of $\mathrm{Diff}(M)$, whereas the  
$[\cdot,\cdot]$ on the right-hand side of \eqref{eq:LieAntiHomo-b} 
is the Lie-bracket in $\Lie(G)$. This is the reason why 
there is an additional minus sign on the left-hand side of 
\eqref{eq:LieAntiHomo-b} that makes this relation a Lie
homomorphism.\footnote{Sometimes this minus sign is said 
to express a Lie-\emph{anti}-homomorphism, namely whenever 
$\Lie\bigl(\mathrm{Diff}(M)\bigr)$ is endowed with the 
opposite Lie structure, in which the Lie-bracket is the 
commutator of vector fields.
The minus sign in \eqref{eq:LieAntiHomo-b} would also 
disappear if we considered a right rather than left 
action of $G$ on $M$, though then other minus signs 
would pop up elsewhere. See the appendix in  \cite{Giulini:EMTaMinSRT2015} for 
a detailed discussion and more information.}

Given $\EMT$ we can for each fundamental vector field 
$V_\xi\in\Section(TM)$ and each oriented $(n-1)$-dimensional 
submanifold $\Sigma$ calculate the charge $Q(\Sigma,J)$ 
according to \eqref{eq:DefCharge}. The result is a number 
that depends linearly on $\xi$ since it is obtained by 
composing three linear maps: 
\begin{equation}
\label{eq:ThreeLinearMaps}
\Lie(G)\ni\xi
\mapsto V_\xi
\mapsto\EMform_{V_\xi}
\mapsto\int_\Sigma\EMform_{V_\xi}\in\reals\,.
\end{equation}
In other words, it defines an element in $\Lie^*(G)$,
the vector space dual to $\Lie(G)$. This assignment of 
an element in  $\Lie^*(G)$ obviously depends on $\Sigma$ 
and $\EMT$, but as $\EMT$ depends on the underlying 
physical fields, we shall regard the dependence to be on 
$\Sigma$ and \emph{some} physical fields. Which physical 
fields? Here we shall make the same distinction as in our 
previous discussion in 
Section\,\ref{sec:LocalEnergyMomentumTransformation}.
In the more general geometric setting that we 
consider now, we assume $\EMT$ and hence $\EMform$ to 
be build locally\footnote{Meaning that the value of $\EMform$
at a point $p\in M$ depends on the values of $F$
and $F'$ and at most finitely many derivatives at $p$.}
from a set of fields $F\in\Section(B)$ whose energy-momentum 
distribution is represented by  $\EMT$ (or $\EMform$), and 
possibly a set of complementary background fields 
$F'\in\Section (B')$. Here, $B$ and $B'$ denote some 
natural bundles over $M$. Following standard terminology
(compare Appendix~\ref{sec:Appendix-NaturalBundles}) the word 
``natural'' is added here in order to indicate our 
requirement that any diffeomorphism $\Phi$ of $M$ shall have
a natural action on sections in $B$ and $B'$, which we 
denote by $D$ and $D'$ respectively . We then assume the 
generalisation of  \eqref{eq:SRTBoostedEMT-4}, which now 
reads
\begin{equation}
\label{eq:DiffeoActingOnEMT}
\EMform(D_\Phi F\,,\,D'_\Phi F')=(\Phi^{-1})^*\EMform(F,F')\,.
\end{equation}
It says that $\EMform$ evaluated on the $\Phi$ transformed 
fields $D_\Phi F$ and $D'_\Phi F'$ is the $\Phi$-transform
of $\EMform(F,F')$. Here we already took into account that 
the  $\Phi$-transform of a section in 
$T^*M\otimes\Wedge^{(n-1)}T^*M$ is the pull-back with the 
inverse $\Phi^{-1}$. Condition \eqref{eq:DiffeoActingOnEMT} 
may be seen as a completeness condition, in the sense that 
we did in fact consider all fields $(F,F')$ on which 
$\EMform$ truly depends. This is the generalisation of 
\eqref{eq:SRTBoostedEMT-4} referred to at the end of 
Section\,\ref{sec:LocalEnergyMomentumTransformation}.
As already stressed there, in the context of general 
groups $G$ acting on $M$ it becomes particularly important 
to distinguish between the systems whose states are 
represented by $F$ and whose energy-momentum distribution 
is fully captured by $\EMform$, and the wider system 
whose states are represented by $(F,F')$ on which 
$\EMform$ may actually also depend. 

\begin{definition}
\label{def:MomentumMap}
The map, that according to \eqref{eq:ThreeLinearMaps}
assigns an element in $\Lie^*(G)$ to each triple 
$(\Sigma,F,F')$ is called \emph{momentum map}%
\footnote{The momentum map is a standard tool in 
symplectic geometry and Hamiltonian Mechanics; see, 
e.g., Chapter 4.2 of \cite{Abraham.Marsden:Mechanics}
and references given therein. The difference between 
our and the standard definition is that we start from 
the energy-momentum tensor, which we assume to be given, 
without explicitly using the symplectic geometry of 
phase space.} 
and denoted by $\MM$. We shall put the arguments in square 
brackets and write $\MM[\Sigma,F,F']\in\Lie^*(G)$ since 
it may take another argument from $\Lie(G)$ to turn that 
element into a real number. For $\xi\in\Lie(G)$
we then write $\MM[\Sigma,F,F'](\xi)\in\reals$ and 
note that this number is just the flux of 
$J_{V_\xi}=i_{V_\xi}\EMT$ through $\Sigma$:
\begin{equation}
\label{eq:DefMomentumMap}
\MM[\Sigma,F,F'](\xi)
:=Q\bigl(\Sigma,i_{V_\xi}\EMT(F,F')\bigr)
=\int_\Sigma\EMform_{V_\xi}(F,F')\,.
\end{equation}  
\end{definition}

We wish to know how the value in $\Lie^*(G)$ of the 
momentum map $\MM$ behaves under the action of the 
group $G$. That is answered by 
\begin{theorem}
\label{thm:MomentumMapEquivariance}
Let $(G,\Phi,M)$ be a left action of the finite dimensional 
Lie group $G$ on $M$ and $\EMform\in\Section(T^*M\otimes\Wedge^{(n-1)}T^*M)$ 
a section depending locally on $(F,F')\in\Section(B)\times\Section(B')$
so that \eqref{eq:DiffeoActingOnEMT} is valid for all $\Phi=\Phi_g$, $g\in G$;
then
\begin{equation}
\label{eq:MomentumMapEquivariance}
\MM\bigl[\Phi_g\Sigma,D_{\Phi_g} F,D'_{\Phi_g} F'\bigr]
=\Ad^*_g\bigl(\MM[\Sigma,F,F']\bigr)\,,
\end{equation}
where $\Phi_g\Sigma$ is the Image of $\Sigma$ under $\Phi_g$
and $\Ad^*$ denotes the co-adjoint representation of 
$G$ on $\Lie^*(G)$, i.e. the inverse-transposed of the adjoint
representation. 
\end{theorem}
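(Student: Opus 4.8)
The plan is to evaluate both sides of \eqref{eq:MomentumMapEquivariance} on an arbitrary $\xi\in\Lie(G)$ and reduce the claim to three facts already at hand: the equivariance hypothesis \eqref{eq:DiffeoActingOnEMT}, the transformation law \eqref{eq:LieAntiHomo-c} for fundamental vector fields under push-forward, and the naturality of the interior product under a diffeomorphism.

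First I would unwind the left-hand side with Definition~\ref{def:MomentumMap} and then use \eqref{eq:DiffeoActingOnEMT} to rewrite the integrand:
\begin{equation}
\begin{split}
\MM\bigl[\Phi_g\Sigma,D_{\Phi_g}F,D'_{\Phi_g}F'\bigr](\xi)
&=\int_{\Phi_g\Sigma}i_{V_\xi}\,\EMform\bigl(D_{\Phi_g}F,D'_{\Phi_g}F'\bigr)\\
&=\int_{\Phi_g\Sigma}i_{V_\xi}\,(\Phi_g^{-1})^*\EMform(F,F')\,.
\end{split}
\end{equation}
Next I change variables over the image surface, $\int_{\Phi_g\Sigma}\omega=\int_\Sigma\Phi_g^*\omega$, which holds once $\Phi_g\Sigma$ carries the orientation induced from $\Sigma$ by $\Phi_g$ (this is the natural reading of ``the image of $\Sigma$ under $\Phi_g$'' in the statement). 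The integrand then reads $\Phi_g^*\bigl(i_{V_\xi}(\Phi_g^{-1})^*\EMform(F,F')\bigr)$.

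The one genuinely algebraic step is to commute $\Phi_g^*$ past the contraction using $\psi^*(i_Y\omega)=i_{(\psi^{-1})_*Y}(\psi^*\omega)$ for a diffeomorphism $\psi$. With $\psi=\Phi_g$ the form factor collapses, $\Phi_g^*(\Phi_g^{-1})^*\EMform(F,F')=(\Phi_{g^{-1}}\circ\Phi_g)^*\EMform(F,F')=\EMform(F,F')$, while the vector field becomes $(\Phi_g^{-1})_*V_\xi=(\Phi_{g^{-1}})_*V_\xi$, which by \eqref{eq:LieAntiHomo-c} (applied with $g$ replaced by $g^{-1}$) equals $V_{\Ad_{g^{-1}}\xi}$. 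Hence the left-hand side equals $\int_\Sigma i_{V_{\Ad_{g^{-1}}\xi}}\EMform(F,F')=\MM[\Sigma,F,F'](\Ad_{g^{-1}}\xi)$, and since $\Ad^*$ is the inverse-transpose of $\Ad$ this is by definition $\bigl(\Ad^*_g\MM[\Sigma,F,F']\bigr)(\xi)$. As $\xi$ was arbitrary, \eqref{eq:MomentumMapEquivariance} follows.

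I do not expect a real obstacle: all the content sits in the completeness/equivariance hypothesis \eqref{eq:DiffeoActingOnEMT}, and the rest is bookkeeping. The one point demanding care is tracking whether $g$ or $g^{-1}$ appears at each stage: push-forward by $\Phi_g$ sends $V_\xi$ to $V_{\Ad_g\xi}$, but it is the inverse $(\Phi_g^{-1})_*$ that surfaces after the change of variables, and precisely this single inversion turns the adjoint action into the co-adjoint (inverse-transpose) one on $\Lie^*(G)$. A secondary, purely conventional, point is the orientation on $\Phi_g\Sigma$, which must be the one pushed forward from $\Sigma$ for the change-of-variables identity to hold with the quoted sign; note that no divergence or support hypotheses on $\EMT$ are needed for this particular statement.
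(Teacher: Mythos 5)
Your argument is correct and is essentially the paper's own proof: both unwind the definition, invoke \eqref{eq:DiffeoActingOnEMT}, commute the insertion with the pull-back via $i_X\circ\Phi^*=\Phi^*\circ i_{\Phi_*X}$, change variables, and use \eqref{eq:LieAntiHomo-c} together with the definition of $\Ad^*$ as the inverse-transpose. The only (immaterial) difference is that you pull back to $\Sigma$ before commuting the interior product past $\Phi_g^*$, whereas the paper commutes first and then changes variables; your remarks on the induced orientation and on the single inversion producing $\Ad_{g^{-1}}$ are both apt.
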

\begin{proof}
The proof follows from a string of elementary steps:
\begin{equation}
\label{eq:MomentumMapEquivarianceProof}
\begin{split}
\MM\bigl[\Phi_g\Sigma,D_{\Phi_g}F,D'_{\Phi_g}F'\bigr](\xi)
&\stackrel{1}{=}\int_{\Phi_g\Sigma}\EMform_{V_\xi}\bigl(D_{\Phi_g}F,D'_{\Phi_g}F'\bigr)\\
&\stackrel{2}{=}\int_{\Phi_g\Sigma}i_{V_{\xi}}\EMform\bigl(D_{\Phi_g}F,D'_{\Phi_g}F'\bigr)\\
&\stackrel{3}{=}\int_{\Phi_g\Sigma}i_{V_{\xi}}\Bigl(\Phi_{g^{-1}}^*\EMform(F,F')\Bigr)\\
&\stackrel{4}{=}\int_{\Phi_g\Sigma}\Phi^*_{g^{-1}}\Bigl(\EMform_{\Phi_{g^{-1}*}V_\xi}(F,F')\Bigr)\\
&\stackrel{5}{=}\int_\Sigma\EMform_{V_{\Ad_{g^{-1}}}(\xi)}(F,F')\\
&\stackrel{6}{=}\ \MM[\Sigma,F,F'](\Ad_{g^{-1}}(\xi))\\
&\stackrel{7}{=}\ \Ad^*_g\bigl(\MM[\Sigma,F,F']\bigr)(\xi)\,.
\end{split}
\end{equation}
Steps\,1 and 2 just follow the definitions, step\,3 uses 
\eqref{eq:DiffeoActingOnEMT}, step\,4 uses the general 
relation $i_X\circ\Phi^*=\Phi^*\circ i_{\Phi_*X}$ for 
the pull-back and the insertion map applied to forms, 
which is obvious from the fact that the pull-back and 
push-forward are relatively transposed maps, step\,5 
uses \eqref{eq:LieAntiHomo-c} and the general relation 
(``change-of-variable formula'') 
$\int_{\Sigma}\phi^*\alpha=\int_{\phi(\Sigma)}\alpha$ for
the integration of a $k$ form over a $k$-dimensional 
submanifold, step\,6 (like step\,1) invokes the 
definition of $\MM$, and, finally, step\,7 just uses 
the definition of the co-adjoint representation.  
\end{proof}
  
We stress the generality of \eqref{eq:MomentumMapEquivariance},
which in particular does not put any further constraint on 
the action of $G$ on $M$ other than the validity of  
\eqref{eq:DiffeoActingOnEMT}, in particular we did so far not 
require the action to be isometric. Such further conditions
now come into play if we want to strengthen  
\eqref{eq:MomentumMapEquivariance} similarly to the way  
\eqref{eq:SRTBoostedEMT-3} strengthens \eqref{eq:SRTBoostedEMT-4} and \eqref{eq:SRT-LauesThmClassic1}
strengthens \eqref{eq:SRT-LauesThmClassic1-Fake}.
That is, we wish to state \eqref{eq:MomentumMapEquivariance} 
with $G$ just acting on the dynamical fields $F$ and not on 
$\Sigma$ and the background fields $F'$. Hence we wish to have 
\begin{equation}
\label{eq:MomentumMapEquivarianceRestricted}
\MM\bigl[\Sigma,D_{\Phi_g} F,F'\bigr]
=\Ad^*_g\bigl(\MM[\Sigma,F,F']\bigr)\,.
\end{equation}
\emph{This} is the generalisation and, in fact, proper phrasing 
of the requirement that that global ``momenta'' calculated 
from a local distribution of dynamical fields $F$ transform 
as ``vectors'' under the action of a group $G$ on $M$
(a statement to be qualified below). 

So we ask: under what conditions is it true that 
\eqref{eq:MomentumMapEquivariance} implies \eqref{eq:MomentumMapEquivarianceRestricted}? 
We start with the dependence on $F'$ and observe
that $D'_{\Phi_g}F'$ may be replaced by $F'$ if $D'_{\Phi_g}$ 
acts trivially on the background fields $F'$, i.e., 
if $G$ is a symmetry group for all configurations $F'$. 
This is trivially always the case if the set of background 
fields $F'$ is empty, i.e. if the energy-momentum tensor
exclusively only depends on the fields the energy momentum 
distribution of which it represents. This will be hard 
to achieve in general. More common is the case in which it  
depends at least on the background metric $g$, like , e.g., 
in the traditional (metric) formulation of Maxwell's 
electrodynamics. Without external sources no other $F'$ exist, 
so $F'=g$  and $D'_{\Phi_h}F'=(\Phi^{-1}_h)^*g$, which equals 
$g$ for all $h\in G$ if and only if $G$ acts on $M$ 
isometrically with respect to $g$. With external sources the 
background fields $F'$ also include charge densities and 
currents, which then, too, must remain invariant under the action 
of $G$. 

Next we turn to the dependence on $\Sigma$. We note that 
the integral \eqref{eq:DefMomentumMap} would be the same 
for two hypersurfaces $\Sigma_1$ and $\Sigma_2$ if they 
were homologous modulo $\mathrm{supp}(\EMT)$ and if 
$\EMform_{V_\xi}$ were closed for all $\xi\in\Lie(G)$. 
In view of \eqref{eq:ExtCovDer-Current} a necessary and 
sufficient condition for the latter to hold is that 
$V_\xi$ is Killing, i.e. that  $G$ acts as isometries 
on $(M,g)$. From that we infer

\begin{corollary}
\label{thm:EquivarianceGlobalCharge}
The momentum map $\MM$ satisfies the restricted condition 
of equivariance in the form 
\eqref{eq:MomentumMapEquivarianceRestricted} if $G$ acts on 
$(M,g)$ by isometries, if $G$ is a symmetry of all other 
background fields $F'$ (other than the metric),  and if 
$\Phi_h\Sigma$ is homologous to $\Sigma$ modulo 
$\mathrm{supp}(\EMT)$ for all $h\in G$.
\end{corollary}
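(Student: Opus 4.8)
The plan is to deduce Corollary~\ref{thm:EquivarianceGlobalCharge} from the already-established general equivariance statement, Theorem~\ref{thm:MomentumMapEquivariance}, by separately disposing of the three extra arguments $\Phi_g\Sigma$, $D'_{\Phi_g}F'$ and $D_{\Phi_g}F$ on the left-hand side of \eqref{eq:MomentumMapEquivariance} and showing that under the stated hypotheses they may be replaced by $\Sigma$, $F'$ and $D_{\Phi_g}F$ respectively. First I would start from \eqref{eq:MomentumMapEquivariance} and rewrite its right-hand side unchanged; the whole task is then to transform the left-hand argument triple $(\Phi_g\Sigma,D_{\Phi_g}F,D'_{\Phi_g}F')$ into $(\Sigma,D_{\Phi_g}F,F')$ without altering the value of $\MM$.

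The background-field argument is the easiest: by hypothesis $G$ is a symmetry of all background fields $F'$, which by definition of $D'$ means $D'_{\Phi_g}F'=F'$ for every $g\in G$, so that argument may be replaced verbatim. For the metric specifically this is exactly the statement that $G$ acts on $(M,g)$ by isometries, i.e. $\Phi_g^*g=g$, as noted in the paragraph preceding the corollary; for any further background fields it is the additional hypothesis that $G$ is a symmetry of them. The $\Sigma$ argument is handled by the homology hypothesis together with the observation, drawn from \eqref{eq:ExtCovDer-Current} and \eqref{eq:KillingField}, that since $G$ acts isometrically every fundamental vector field $V_\xi$ is Killing, hence $\EMform_{V_\xi}(F,F')$ is a closed $(n-1)$-form for every $\xi\in\Lie(G)$. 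Because $\Phi_g\Sigma$ is homologous to $\Sigma$ modulo $\mathrm{supp}(\EMT)$, there is an $n$-dimensional $\Omega$ with $\partial\Omega=\Sigma\cup(-\Phi_g\Sigma)\cup B$ and $B\cap\mathrm{supp}(\EMT)=\emptyset$; applying Stokes' theorem (equivalently Corollary~\ref{thm:GaussOnFluxes}) to the closed form $\EMform_{V_\xi}(F,F')$ over $\Omega$ gives $\int_{\Phi_g\Sigma}\EMform_{V_\xi}(F,F')=\int_\Sigma\EMform_{V_\xi}(F,F')$ for every $\xi$, i.e. $\MM[\Phi_g\Sigma,\cdot,\cdot]=\MM[\Sigma,\cdot,\cdot]$. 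Note one must apply this \emph{before} replacing $D'_{\Phi_g}F'$ by $F'$ is immaterial since that replacement does not affect closedness; cleanest is to first use isometry to write $D'_{\Phi_g}F'=F'$, then invoke closedness of $\EMform_{V_\xi}(D_{\Phi_g}F,F')$ and the homology to slide the integration domain from $\Phi_g\Sigma$ back to $\Sigma$. Combining these two replacements turns the left-hand side of \eqref{eq:MomentumMapEquivariance} into $\MM[\Sigma,D_{\Phi_g}F,F']$, which is precisely \eqref{eq:MomentumMapEquivarianceRestricted}.

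The one point requiring a little care — and what I expect to be the main obstacle, such as it is — is the logical ordering of the two replacements and making sure the closedness used in the Stokes step applies to the form with the \emph{transformed} dynamical field $D_{\Phi_g}F$ rather than $F$ itself. This is unproblematic because \eqref{eq:ExtCovDer-Current} and \eqref{eq:KillingField} show $\EMform_{V_\xi}(\tilde F,\tilde F')$ is closed for \emph{any} admissible field configuration $(\tilde F,\tilde F')$ whenever $V_\xi$ is Killing — closedness is a consequence of $V_\xi$ being Killing plus the assumed symmetry \eqref{eq:EM-Symmetry} and conservation \eqref{eq:EM-ConservationSRT} of the energy-momentum tensor, and does not depend on which particular $F$ one feeds in. Since the corollary is placed in the running text where these two conditions on $\EMT$ are standing assumptions, I would simply remark that $D_{\Phi_g}F$ is again an admissible configuration, so $\EMform_{V_\xi}(D_{\Phi_g}F,F')$ is closed, and the homology argument applies to it directly. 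With that observation in place the proof is a two-line chain of equalities after citing Theorem~\ref{thm:MomentumMapEquivariance}, and needs no new computation.
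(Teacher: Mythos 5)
Your proposal is correct and follows essentially the same route as the paper: the text preceding the corollary argues exactly as you do, first replacing $D'_{\Phi_g}F'$ by $F'$ via the isometry/background-symmetry hypothesis, and then sliding the integration domain from $\Phi_g\Sigma$ back to $\Sigma$ via Stokes' theorem, using that $\EMform_{V_\xi}$ is closed because the fundamental vector fields are Killing (together with the standing assumptions \eqref{eq:EM-Symmetry} and \eqref{eq:EM-Conservation}) and the homology hypothesis. Your explicit attention to the ordering of the two replacements and to the closedness of the form built from the \emph{transformed} configuration $D_{\Phi_g}F$ is a point the paper leaves implicit, but it is resolved exactly as you say.
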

Note that the hypotheses of this corollary turn the 
values of the momentum map and hence the charges 
(contractions of $\MM$ with elements of $\Lie(G)$) 
into ``conserved'' charges, in the sense that 
the charges at all $\Phi_h\Sigma$ are the same. If 
the group $G$ of motions contain time translations,
the general notion of ``conserved'' used here reduces 
to the usual ``constant in time''.

At the end of this section we wish to stress that the 
hypotheses on mentioned in 
corollary\,\ref{thm:EquivarianceGlobalCharge} are 
sufficient conditions, which need not be necessary in
special situations. We made some effort to first derive 
the general condition of equivariance 
\eqref{eq:MomentumMapEquivariance} under comparatively 
weak hypotheses, and then showed how the standard 
conditions (like that $G$ acts isometrically) lead to 
the restricted  condition of equivariance 
\eqref{eq:MomentumMapEquivarianceRestricted}, known as 
$\Ad^*$-equivariance.

\subsection{Poincar\'e covariant momenta}
\label{sec:PoincareCovMomenta}
The equivariance condition 
\eqref{eq:MomentumMapEquivarianceRestricted}
gives the precise expression to the statement usually 
employed in physics,  that conserved momenta, i.e.
the values of $\MM$, behave ``covariantly'' under 
a certain group of motions. At the same time we learn 
how to characterise the vector space of which the 
conserved momenta are members of. They are elements in $\Lie^*(G)$ and they transform under the co-adjoint representation of $G$ . 

The physical interpretation of these momenta must 
be obtained from $G$. If, e.g., $G$ is the group of 
space-time translations, which is invariantly contained 
in the Poincar\'e group as the maximal abelian normal 
subgroup, then the corresponding ``momenta'' correspond to 
energy and linear momentum. Together they transform like 
``four vectors'' under Poincar\'e transformations, 
that is, trivially under translations and by the 
defining representation of the Lorentz group. But that 
last statement is deceptive, because our momenta are 
really elements of $\Lie^*(G)$, not the four-dimensional 
real vector space $V$ underlying affine Minkowski space. 
In this subsection we show how this is properly phrased 
for the Poincar\'e group and how it relates to our 
previous discussion in Sections\,\ref{sec:GlobalEM-Trans} 
and \ref{sec:LaueClassical}.

To start the discussion, let $G=\Poin$ be the Poincar\'e group
of spacetime, which is isomorphic to $V\rtimes\Lor$, where 
$\Lor$ is the Lorentz group, so that for $(a,A)$ and 
$(b,B)\in V\rtimes\Lor$, we have 
\begin{subequations}
\label{eq:Pin-Laws}
\begin{equation}
\label{eq:Pin-MultLaw}
(a,A)(b,B)=(a+Ab\,,\,AB)
\end{equation} 
and 
\begin{equation}
\label{eq:Pin-Inverse}
(a,A)^{-1}=\bigl(-A^{-1}a\,,\,A^{-1}\bigr)\,,
\end{equation} 
\end{subequations}
where juxtapositions like $Ab$ denote the action of $A\in \Lor$ 
on $b\in V$ under the defining representation. We refer to 
Appendix~\ref{sec:Appendix-MinkowskiSpace} for a discussion of the 
(non natural!) isomorphism $\Poin\cong V\rtimes\Lor$.
From \eqref{eq:Pin-Laws} we can easily calculate the 
adjoint and co-adjoint representation on $\Lie(\Poin)$ and 
$\Lie^*(\Poin)$ respectively. These take a convenient form 
if we identify both of these linear spaces with 
$V\oplus (V\wedge V)$, which also allows for an easy comparison 
between the adjoint and co-adjoint representations (which are 
now realised on the same vector space). Recall that usually one 
identifies $\Poin\cong V\rtimes\Lor$ and $\Lor$ with isometries 
of $(V,\eta)$. This induces the usual identification 
$\Lie(\Poin)\cong V\rtimes\Lie(\Lor)$ where $\Lie(\Lor)\cong\{X\in\End(V):\eta(Xv,w)=-\eta(v,Xw)\,\
\forall v,w\in V\}$. To that we add the identification of 
the latter (the $\eta$-antisymmetric endomorphisms) with 
$V\wedge V$, in that we agree that $x\wedge y\in V\wedge V$ 
corresponds to the following element of $\End(V)$: 
$x\wedge y(v):=x\,\eta(y,v)-y\,\eta(x,v)$, which one easily checks
is $\eta$-antisymmetric. The extension of that correspondence 
to all elements in $V\wedge V$ (not just pure exterior 
products) then follows from linearity. Hence the Lie brackets 
between to elements $(x,X)$ and $(y,Y)$ in $V\oplus (V\wedge V)\cong\Lie(\Poin)$ is 
\begin{equation}
\label{eq:PoinLieBracket}
\bigl[(x,X),(y,Y)\bigr]=
\bigl(X\cdot y-Y\cdot x\,,\,[X,Y]\bigr)\,,
\end{equation}  
where the action of $X,Y\in V\wedge V$ on $V$ (denoted by 
a dot in \eqref{eq:PoinLieBracket}) is as explained above,
and $[X,Y]:=X\circ Y-Y\circ X$ is just the commutator 
of these maps. Finally, we identify $\Lie^*(G)$ with 
$V\oplus (V\wedge V)$ by identifying the dual space of  
$V\oplus (V\wedge V)$ with itself through the following 
inner product (non-degenerate symmetric bilinear form) 
on $V\oplus (V\wedge V)$:
\begin{equation}
\label{eq:PoinLieInnerProduct}
\big\langle(x,X),(y,Y)\big\rangle
=\eta(x,y)+\tfrac{1}{2}\eta\otimes\eta(X,Y)\,.
\end{equation}  
Here the second term just corresponds to what we 
called the renormalised inner product on forms 
in equation \eqref{eq:RenormInnerProduct} of the 
Appendix. Hence if $X=u\wedge v=u\otimes v-v\otimes u$ 
and $Y=w\wedge z=w\otimes z-z\otimes w$ we have $\tfrac{1}{2}\eta\otimes\eta(X,Y)
=\eta(u,v)\eta(v,z)-\eta(u,z)\eta(v,w)$.

The adjoint and co-adjoint representation on 
$V\oplus (V\wedge V)$ are now easily obtained as follows.
Let $s\mapsto \bigl(b(s),B(a)\bigr)$ be a curve in 
$\Poin$ through the identity at $s=0$. 
We set with $d/ds\vert_{s=0}\bigl(b(s),B(s)\bigr)=(P,M)\in V\oplus(V\wedge V)$
and have 
\begin{equation}
\label{eq:Rep-Ad}
\begin{split}
\Ad_{(a,A)}(P,M):=&\frac{d}{ds}\bigg\vert_{s=0}
\bigl(a,A\bigr)\bigl(b(s),B(s)\bigr)\bigl(a,A\bigr)^{-1}\\
=&\Bigl(AP-[(A\otimes A)M]a\,,\,(A\otimes A)M\Bigr)\,.
\end{split}
\end{equation}
The co-adjoint representation is the inverse-transposed of that,
transposition now being understood in terms of the inner product 
\eqref{eq:PoinLieInnerProduct}:
\begin{equation}
\label{eq:Rep-AdTrans-a}
\Bigl\langle(P',M')\,,\,\Ad_{(a,A)}(P,M)\Big\rangle
=\Bigl\langle\Ad^\top_{(a,A)}(P',M')\,,\,(P,M)\Big\rangle\,.
\end{equation}
Writing this out using \eqref{eq:PoinLieInnerProduct}
we read off (dropping the primes)
\begin{equation}
\label{eq:Rep-AdTrans-b}
\Ad^\top_{(a,A)}(P,M)=\Bigl(
A^{-1}P\,,\,\bigl(A^{-1}\otimes A^{-1}\bigr)M
+\bigl(A^{-1}a\bigr)\wedge\bigl(A^{-1}P\bigr)\Bigr)\,,
\end{equation}
and therefore  
\begin{equation}
\label{eq:Rep-CoAd}
\begin{split}
\Ad^*_{(a,A)}(P,M):=&\bigl(\Ad_{(-A^{-1}a,A^{-1})}\bigr)^\top (P,M)\\
=&\Bigl(AP\,,\, (A\otimes A)M-a\wedge AP \Bigr)\,.
\end{split}
\end{equation}
The first entry on the right-hand side of \eqref{eq:Rep-CoAd}
shows what previously was meant when we said that $P=P^\alpha e_\alpha$
transformed as a ``four vector''. But note from the second slot that 
$M$ does not just transform as an ``antisymmetric 2nd-rank tensor'',
there is an additional piece $a\wedge AP$ expressing the base-point
dependence of angular momentum and centre-of-mass motion, which 
group-theoretically exists as a result of the fact that 
$\Lor\subset\Poin$ is not a normal subgroup and hence 
$\Lie(\Lor)\subset\Lie(\Poin)$ not an ideal.  

\newpage
\subsection{Laue's theorem: geometric formulation}
\label{sec:LaueGeometric}
We are now in a position to phrase Laue's theorem in 
geometric language and show what its underlying 
assumptions are. We start with the remark that this 
theorem is, in fact, a statement about conserved 
currents which is then specialised to the case that 
the current is constructed from an energy-momentum 
tensor through contraction with a Killing field. So 
let us start  by assuming that we have a conserved 
current $J\in\Section(TM)$, i.e. in components 
$J=J^\alpha\,\partial/\partial x^\alpha$ with 
$\nabla_\alpha J^\alpha=0$. Its corresponding 
$(n-1)$-form, 
$\star J^\flat\in\Section\bigl(\Wedge^{(n-1)}T^*M\bigr)$
 will be called $\CurrentForm$. Hence $\CurrentForm$ is 
closed (compare 
Appendix~\ref{sec:Appendix-BundlesOverManifolds}),  
\begin{equation}
\label{eq:CurrentFormClosed}
d\CurrentForm=0\,. 
\end{equation}
We assume $\CurrentForm$ to admit a symmetry, that is, we 
assume there exists a vector field $U\in\Section(TM)$ 
such that the Lie derivative of $\CurrentForm$ with 
respect to $U$ vanishes: 
\begin{equation}
\label{eq:CurrentFormSymmetry}
L_U\CurrentForm=0\,. 
\end{equation}
This is equivalent to saying that the motions generated 
by $U$ (its flow) leave $\CurrentForm$ invariant. 

Next we recall Cartan's formula, according to which the
Lie derivative of forms is given by the symmetrised  
combination of the exterior derivative $d$ and the map 
$i_U$ which inserts $U$ in the first tensor slot:
\begin{equation}
\label{eq:CartanEquation}
L_U=d\circ i_U+i_U\circ d\,. 
\end{equation}
Hence \eqref{eq:CurrentFormClosed} and 
\eqref{eq:CurrentFormSymmetry} imply that 
$\CurrentForm_U:=i_U\CurrentForm\in\Section\bigl(\Wedge^{(n-2)}T^*M\bigr)$ 
is closed 
\begin{equation}
\label{eq:CurrentFormContractedClosed}
d\CurrentForm_U=0\,. 
\end{equation}
Clearly that implies by Stokes' theorem that if 
$\sigma_1,\sigma_2\subset M$ are $(n-2)$-dimensional 
submanifolds which are homologous modulo 
$\mathrm{supp}(J)$ then the integrals of 
$\CurrentForm_U$ over $\sigma_1$ equals that of 
$-\sigma_2$. But is not quite yet the result we are 
after. Suppose $\varphi\in C^\infty(M)$ is a smooth 
function, then 
$d\varphi\wedge\CurrentForm_U\in\Section\bigl(\Wedge^{(n-1)}T^*M\bigr)$
is not only closed but exact (compare Appendix~\ref{sec:Appendix-BundlesOverManifolds}): 
\begin{equation}
\label{eq:DefCurrent-UPhi}
\CurrentForm_{(\varphi,U)}
:=d\varphi\wedge\CurrentForm_U
=d(\varphi\CurrentForm_U)\,. 
\end{equation}
Now, this leads us to the sought-for generalised 
Laue-type theorem for conserved currents:
\begin{theorem}[Laue's theorem, geometric version for currents]\
\label{thm:LauesTheoremCurrentVersion}
Let $(M,g)$ be a $n$-dimensional Semi-Riemannian 
manifold, $J\in\Section(TM)$ a divergenceless (with respect 
to the Levi-Civity connection) vector field, and 
$\CurrentForm\in\Section\bigl(\Wedge^{(n-1)}T^*M\bigr)$ 
its corresponding closed $(n-1)$-form. Let Further 
$U\in\Section(TM)$ be a vector field generating 
symmetries of $\CurrentForm$, i.e. $L_U\CurrentForm=0$. 
Then, for any $\varphi\in C^{\infty}(M)$ and any 
$(n-1)$-dimensional submanifold $\Sigma\subset M$ 
such that $\partial\Sigma\cap\mathrm{supp}\bigl(\CurrentForm_{(\varphi,U)}\bigr)=\emptyset$ 
or sufficiently rapid fall-off of $\varphi\CurrentForm$
at each end of $\Sigma$, we have
\begin{subequations}
\label{eq:LauesTheoremGeomVersion}
\begin{alignat}{1}
\label{eq:LauesTheoremGeomVersion-a}
0&\,=\,\int_\Sigma\CurrentForm_{(\varphi,U)}\\
\label{eq:LauesTheoremGeomVersion-b}
 &\,=\,\int_{\Sigma}\bigl(U(\varphi)\star J^\flat-J(\varphi)\star U^\flat\bigr)\\
\label{eq:LauesTheoremGeomVersion-c}
 &\,=\,\int_{\Sigma}\bigl(U(\varphi)\,g(J,n)-J(\varphi)\,g(U,n)\bigr)\,d\mu_\Sigma
\qquad (\text{$\Sigma$ non lightlike})\,.
\end{alignat}
\end{subequations}
Here $\CurrentForm_U:=i_U\CurrentForm$ and $\CurrentForm_{(\varphi,U)}$ 
as in \eqref{eq:DefCurrent-UPhi}. Moreover, $n$ denotes the non-null
normal to $\Sigma$ and the volume form 
$d\mu_\Sigma\in\Section\bigl(\Wedge^{(n-1)}T^*\Sigma\bigr)$ on $\Sigma$ is defined 
by $d\mu_\Sigma=\pm i_n\varepsilon$ with positive/negative sign for 
timelike/spacelike $n$. 
\end{theorem}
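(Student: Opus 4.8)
The plan is to assemble the three displayed equalities from the preparatory identities already established in Section~\ref{sec:LaueGeometric}, and then invoke Stokes' theorem once. The first equality \eqref{eq:LauesTheoremGeomVersion-a} is the actual content; the other two are just unwrappings of the definition of $\CurrentForm_{(\varphi,U)}$ in coordinates and in metric language, so I expect them to be routine once the first is in place. Concretely: first I would recall that $\CurrentForm_{(\varphi,U)}=d(\varphi\,\CurrentForm_U)$ from \eqref{eq:DefCurrent-UPhi}, which is valid because $\CurrentForm_U$ is closed by \eqref{eq:CurrentFormContractedClosed}; the latter followed from $d\CurrentForm=0$ (hypothesis) and $L_U\CurrentForm=0$ (hypothesis) via Cartan's formula \eqref{eq:CartanEquation}. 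So $\varphi\,\CurrentForm_U$ is an $(n-2)$-form with $d(\varphi\,\CurrentForm_U)=\CurrentForm_{(\varphi,U)}$, and integrating over $\Sigma$ gives $\int_\Sigma\CurrentForm_{(\varphi,U)}=\int_{\partial\Sigma}\varphi\,\CurrentForm_U$ by Stokes. The hypotheses on $\Sigma$ — either $\partial\Sigma\cap\mathrm{supp}(\CurrentForm_{(\varphi,U)})=\emptyset$, or sufficiently rapid fall-off of $\varphi\,\CurrentForm$ at each end — are precisely what makes this boundary term vanish (in the non-compact case one applies Stokes on an exhaustion by compact pieces and passes to the limit, the fall-off killing the boundary contribution). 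That establishes \eqref{eq:LauesTheoremGeomVersion-a}.

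For \eqref{eq:LauesTheoremGeomVersion-b} I would compute $d\varphi\wedge i_U\CurrentForm$ directly. Using $\CurrentForm=\star J^\flat$ and the standard identity $d\varphi\wedge i_U\alpha = U(\varphi)\,\alpha - i_U(d\varphi\wedge\alpha)$ for any form $\alpha$ (here applied with $\alpha=\CurrentForm$, an $(n-1)$-form, so $d\varphi\wedge\CurrentForm$ is an $n$-form proportional to $\varepsilon$), together with the contraction rule relating $i_U\star J^\flat$ and $\star U^\flat$ times the right scalar, one gets $\CurrentForm_{(\varphi,U)}=U(\varphi)\star J^\flat - J(\varphi)\star U^\flat$. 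The cleanest way is probably to note $d\varphi\wedge\CurrentForm = d\varphi\wedge\star J^\flat = g(d\varphi^\sharp,J)\,\varepsilon = J(\varphi)\,\varepsilon$, so $i_U(d\varphi\wedge\CurrentForm)=J(\varphi)\,i_U\varepsilon = J(\varphi)\star U^\flat$ (using $i_U\varepsilon=\star U^\flat$ up to the sign conventions fixed in the appendix); combined with $d\varphi\wedge i_U\CurrentForm + i_U(d\varphi\wedge\CurrentForm)=U(\varphi)\CurrentForm$ this yields \eqref{eq:LauesTheoremGeomVersion-b}. Finally, \eqref{eq:LauesTheoremGeomVersion-c} is obtained by restricting to $\Sigma$: on a non-lightlike hypersurface with unit normal $n$ one has $\iota_\Sigma^*(\star W^\flat)=g(W,n)\,d\mu_\Sigma$ for any vector field $W$ (this is exactly the flux formula underlying Definition~\ref{def:Charge}), applied to $W=J$ and $W=U$.

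I expect the main obstacle to be bookkeeping of signs and normalisation factors: the identity $i_U\varepsilon=\pm\star U^\flat$, the relation $d\varphi\wedge\star J^\flat=\pm J(\varphi)\varepsilon$, and the sign in $d\mu_\Sigma=\pm i_n\varepsilon$ for timelike versus spacelike $n$ all depend on the mostly-minus signature conventions and the Hodge-star conventions fixed in Appendices~B and~C, and these must be tracked consistently so that the two terms in \eqref{eq:LauesTheoremGeomVersion-c} come out with a relative minus sign as written. None of this is conceptually hard, but it is where an error would creep in; everything else is a two-line application of Cartan's magic formula plus Stokes. A secondary, more analytic point is the non-compact case: one should say explicitly that ``sufficiently rapid fall-off of $\varphi\CurrentForm$'' means fall-off fast enough that $\int_{\partial\Sigma_R}\varphi\,\CurrentForm_U\to0$ as the truncation radius $R\to\infty$, which in the situations of interest (Laue's classical setting, $\varphi$ a coordinate function) is guaranteed by the $1/r^{3+\epsilon}$-type decay of $\CurrentForm$ discussed after Theorem~\ref{thm:LauesTheoremClassic}.
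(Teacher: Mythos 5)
Your proposal is correct and follows essentially the same route as the paper's proof: exactness of $\CurrentForm_{(\varphi,U)}=d(\varphi\,\CurrentForm_U)$ plus Stokes for \eqref{eq:LauesTheoremGeomVersion-a}, the antiderivation identity $i_U(d\varphi\wedge\CurrentForm)=U(\varphi)\CurrentForm-\CurrentForm_{(\varphi,U)}$ together with $d\varphi\wedge\star J^\flat=J(\varphi)\,\varepsilon$ and $i_U\varepsilon=\star U^\flat$ for \eqref{eq:LauesTheoremGeomVersion-b}, and the restriction formula $\star W^\flat\big\vert_{T\Sigma}=g(W,n)\,d\mu_\Sigma$ for \eqref{eq:LauesTheoremGeomVersion-c}. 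Your explicit remark that the boundary term produced by Stokes is $\int_{\partial\Sigma}\varphi\,\CurrentForm_U$ (so the vanishing condition really concerns $\varphi\,\CurrentForm_U$ on $\partial\Sigma$) is a slightly more careful reading of the hypothesis than the paper's own one-line dismissal, but it is the same argument.
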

\begin{proof}
The first equality \eqref{eq:LauesTheoremGeomVersion-a}
follows immediately from the exactness of 
$\CurrentForm_{(\varphi,U)}$, Stokes' theorem and the hypothesis 
that $\CurrentForm_{(\varphi,U)}$ vanishes on the boundary of 
$\Sigma$ or falls-off sufficiently fast at each end. The second expression \eqref{eq:LauesTheoremGeomVersion-b} 
follows from 
$i_U(d\varphi\wedge\CurrentForm)
=U(\varphi)\CurrentForm-\CurrentForm_{(\varphi,U)}$ 
and the definition of $\star$, given in \eqref{eq:DefHodgeDuality-b} 
of the Appendix, applied to the left-hand side: 
$i_U(d\varphi\wedge\CurrentForm)
=i_U(d\varphi\wedge\star J^\flat)
=(i_U\varepsilon)\,\bigl\langle d\varphi,J^\flat\big\rangle_{\mathrm{norm}}=
J(\varphi)\star U^\flat$. Finally, the third expression 
\eqref{eq:LauesTheoremGeomVersion-c} is valid only in the case the 
normal $n$ to $\Sigma$ is nowhere lightlike, for then 
\eqref{eq:HodgeOfVectors} gives $\star J^\flat=i_J\varepsilon$ and hence  
$\star J^\flat\big\vert_{T\Sigma}
=\bigl(g(J,n)/g(n,n)\bigr)\,i_n\varepsilon\big\vert_{T\Sigma}
=g(J,n)\,d\mu_\Sigma$ 
with $d\mu_\Sigma:=g(n,n)\,i_n\varepsilon\big\vert_{T\Sigma}$. Note that $g(n,n)=\pm 1$, depending on 
whether $n$ is timelike/spacelike ($\Sigma$ spacelike/timelike). 
\end{proof}

\subsection{Recovery of the classical statement of 
Laue's theorem }
\label{sec:RecoveryOfOriginalLaue}
The theorem above is very general, in that for 
given $J$ and $U$ it holds for all $\Sigma$ and 
$\varphi$. In order to see that the classical version
of Laue's theorem emerging as a special case, we now
explicitly list all the particular choices leading to 
it:
\begin{subequations}
\label{eq:LauesThmGeom}
\begin{enumerate}
\item
$(M,g)$ is 4-dimensional Minkowski space, i.e. there exist global affine coordinates $\{x^0,x^1,x^2,x^3\}$ so that 
\begin{equation}
\label{eq:LauesThmGeom-0}
g=\eta=\eta_{\alpha\beta}dx^\alpha\otimes dx^\beta\,
\quad\text{with}\quad
\eta_{\alpha\beta}=\mathrm{diag}(1,-1,-1,-1)\,.
\end{equation}
\item
The symmetry-generating vector field $U$ for $J$ is
that generating time translations, so that 
\begin{equation}
\label{eq:LauesThmGeom-1}
L_U\CurrentForm=0\,\quad\text{where}\quad
U=\frac{\partial}{\partial x^0}\,.
\end{equation}
Note that since $\frac{\partial}{\partial x^0}$ is
also a symmetry of $g=\eta$ the condition 
$L_U\CurrentForm=0$ is equivalent to $L_UJ=0$
(because $L_U$ commutes with index lowering/raising
and also with the Hodge duality map). 
\item
We take $\Sigma:=\{p\in M:x^0(p)=0\}$ so that 
\begin{equation}
\label{eq:LauesThmGeom-2}
n=\frac{\partial}{\partial x^0}\quad\text{and}\quad
d\mu_\Sigma =dx^1\wedge dx^2\wedge dx^3=:d^3x\,.
\end{equation}
\end{enumerate} 
\end{subequations}

Given (1--3) the integrand in 
\eqref{eq:LauesTheoremGeomVersion-c}
becomes (we write 
$\partial\varphi/\partial x^\alpha=:\varphi_{,\alpha}$;
latin indices range from $1$ to $3$; note that $J_0=J^0$
since $\eta_{00}=1$):
\begin{equation}
\label{eq:LauesThmGeom-3}
U(\varphi)\,g(J,n)-J(\varphi)\,g(U,n)=
\varphi_{,0}J_0-J^\alpha\varphi_{,\alpha}=-J^a\varphi_{,a}\,.
\end{equation}
Hence, in this special case, 
Theorem\,\ref{thm:LauesTheoremCurrentVersion} just 
reduces to a more or less obvious result. Indeed, if 
$J^\alpha_{,0}=0$ then $J^\alpha_{,\alpha}=J^a_{,a}$ 
so that the spatial components define a 3-dimensional 
divergenceless vector field in space which vanishes if 
integrated against any gradient vector field, provided 
that the surface terms one encounters in partial 
integration vanish.

So far we have not restricted to any special choice of
$\varphi$, except that the product $\varphi J$ must 
vanish on $\partial\Sigma$ or fall-off sufficiently 
rapidly at $\Sigma$'s ends. These conditions can of 
course always be met by appropriate choices of 
$\varphi$, given any $J$. But if $J$ already has 
support bounded away from $\partial\Sigma$ and/or 
falls off sufficiently fast at each end of $\Sigma$, 
then we may even multiply it with unbounded functions 
$\varphi$, like any of the spatial coordinate functions 
$x^m$, and obtain
\begin{equation}
\label{eq:LauesThmGeom-4}
\int_\Sigma d^3x \, J^a (x^m)_{,a}
=\int_\Sigma d^3x \, J^m=0
\end{equation}
for all $m\in\{1,2,3\}$. Not that in this 
formula the components of $J$ must refer to 
affine (inertial) coordinates. 

Now, the last step in order to arrive at our previous 
classic formulation of Laue's theorem in 
Section\,\ref{sec:LaueClassical} is to apply this 
to the conserved current 
\begin{subequations}
\label{eq:LauesThmGeom-Cont}
\begin{equation}
\label{eq:LauesThmGeom-5}
J= J_K:=i_{K^\flat}\EMT
 = K_\alpha \EMT^{\alpha\beta}\,
   \frac{\partial}{\partial x^\beta}\,,
\end{equation}
where $K$ is Killing. Since we are in Minkowski space 
we have 10 Killing fields at our disposal, but in order 
to reproduce equation \eqref{eq:SRT-VanishingIntegrals} 
we just need to consider the space-time translations
\begin{equation}
\label{eq:LauesThmGeom-6}
K=\frac{\partial}{\partial x^\mu}\,.
\end{equation}
\end{subequations}
Indeed, inserting \eqref{eq:LauesThmGeom-5}
and \eqref{eq:LauesThmGeom-6} into 
\eqref{eq:LauesThmGeom-4} immediately leads to 
\eqref{eq:SRT-VanishingIntegrals}. 

This derivation of the original result 
\eqref{eq:SRT-VanishingIntegrals} might look like 
cracking a walnut with a sledgehammer. However, 
looking at the 3+2 equations \eqref{eq:LauesThmGeom}
and \eqref{eq:LauesThmGeom-Cont} make it clear that 
this is a very special case indeed and that the general 
result it derived from, i.e. 
\eqref{eq:LauesTheoremGeomVersion}, is far more 
general. Also, the derivation of 
\eqref{eq:SRT-VanishingIntegrals} from 
\eqref{eq:LauesTheoremGeomVersion} makes it clear what 
geometric structures are necessary in order to make 
sense of the original form. The physical 
interpretation and significance of other consequences 
of \eqref{eq:LauesTheoremGeomVersion} remain to be 
developed. One could, e.g., think of stationary 
axisymmetric spacetimes in which $U$ is taken to be 
the timelike and $K$ the rotational Killing field 
(which commute, $[U,K]=0$). 
A stationary $\EMT$, i.e. $L_U\EMT=0$, then implies 
$L_U\CurrentForm=0$ with $\CurrentForm=\star (i_{K^\flat}\EMT)^\flat$,
as required. I do not know of attempts to evaluate 
the consequences of \eqref{eq:LauesTheoremGeomVersion} 
in such situations. At the same time is is likely that 
such consequences are implicit in some of the works on 
the dynamics of matter in stationary axisymmetric spacetimes, 
like Kerr.

\newpage
\section{Conclusions and Outlook}
\label{sec:Conclusion}
The underlying theme of this investigation consists in the 
problem of how local distributions of energy and momentum 
meaningfully combine (``add'')  to global quantities. In 
Section\,\ref{sec:GroupActionsAndCharges} we gave an answer
that required a global group action on the state space 
of the physical system under consideration. The group and its 
action links the local geometric objects with the global 
quantities whose habitat and interpretation are closely 
related to group-theoretic concepts. If the action of the 
group on state space derives from an action on space-time, 
the global quantities receive their interpretation from 
the type of motions generated by the group. This is encoded
in the \emph{momentum map} of Definition\,\ref{def:MomentumMap},
together with its property  
\eqref{eq:MomentumMapEquivarianceRestricted} of 
restricted $\mathrm{Ad^*}$-equivariance under motions of the 
dynamical fields alone. Only in combination with restricted  
$\mathrm{Ad^*}$-equivariance can we properly associate the 
local dynamical fields $F$ with globally conserved quantities 
and identify their mathematical habitat.  
For that it is essential that the representatives of 
physical states, i.e. the local fields $F$, are geometric 
objects in the sense that they are naturally acted upon by 
diffeomorphisms.\footnote{%
A pragmatic definition of a local 
geometric object on a manifold is that for any chart it 
may be fully represented in the chart domain by arrays 
of numbers, with well defined transformation rules under 
changes of charts; see, e.g., \S\,4.13 of Trautmann's 
lecture~\cite{Trautman:1965}. This comprises all sections 
in bundles associated to the bundle of linear frames, 
including tensors and their densities, which all transform 
linearly, and also connections, which change by affine 
transformations. For a more systematic approach to 
define geometric structures, see, e.g., \cite{Salvioli:1972}.}
Without such a group-theoretic link no obvious meaning 
can generally be given to integrals over local energy-momentum 
distributions. 

Even though we only considered matter fields in fixed 
background space-times (which may or may not satisfy 
Einstein's equations), our considerations can be extended 
to GR where the metric field itself is 
dynamical and contributes to the global quantities that are 
usually identified with energy and momentum. In that context 
presumably the first person to seriously wonder about 
the habitat and transformation properties of formally 
constructed global quantities was Felix Klein in 
\cite{Klein.Felix:1918b}, who coined the term 
``free affine vector'' to characterise the four numbers 
that resulted from integrations of components of local 
geometric objects (involving the metric and the connection).
The word ``free''  and ``affine'' were meant to say that 
these four numbers are the component of a vector, but 
that vector is not attached to any point in spacetime. 
The word ``vector'' was meant to indicate that these four 
numbers transform properly under a restricted
class of diffeomorphisms. The latter were defined as those 
preserving the metric structure at large spatial distances,
which was assumed to be Minkowskian. Hence, modulo those 
diffeomorphisms that tend to the identity at spacelike 
infinity and which are considered mere gauge transformations 
(redundancies in the description), the Poincar\'e group 
should result with all its associated conserved charges. 
Hence the idea is to construct a momentum map for the 
Poincar\'e group in GR as well, given that
the field configurations satisfy suitable asymptotic 
conditions. This idea is to a certain extend realised by the 
so-called ADM constructions, as explained in more detail in 
\cite{Fischer.Marsden:1979-1} and very lucidly in 
\cite{Beig.Murchadha:1987} (see also 
\cite{Giulini:SpringerHandbookSpacetime} for a general 
review of the ADM construction, its relation to momentum 
maps, and all the relevant references). Again, and now
for the last time, we stress that such expressions 
loose their physical interpretation if formally 
transcribed to situations without a proper group-theoretic 
context. This, it seems to me, is often not sufficiently 
appreciated.   

\newpage
\appendix
\appendixpage

The following three appendices combine general 
information on our notation and conventions in 
differential geometry, as well as some basic but 
conceptually important structural properties 
of Minkowski space and its automorphism group,
the Poincar\'e group.

\section{Minkowski space and Poincar\'e group}
\label{sec:Appendix-MinkowskiSpace}
Within the realm of Special Relativity, spacetime 
is modelled by Minkowski space. In this appendix 
we wish to briefly recall some of the mathematical 
structures underlying Minkowski space and its 
automorphism group, called the Poincar\'e group.
A proper awareness of these structures is important 
for the discussion in the main text.  

For the sake of generality we shall give our discussion 
for spacetimes of general dimension $n$. This will not 
induce any additional complications. Now, $n$-dimensional 
Minkowski space is defined to be a real affine space of 
dimension $n\geq 2$ whose associated vector space is endowed 
with a non-degenerate symmetric bilinear form of signature 
$(1,n-1)$ [i.e. one positive, $n-1$ negative dimensions], 
also called a Lorentzian inner product. Minkowski space 
is also a real $n$-dimensional Manifold homeomorphic 
to $\reals^n$ with a Lorentzian metric and associated 
Levi-Civita connection which is flat. 

\subsection{Affine spaces}
\label{sec:Appendix-AffineSpaces}
Here we wish to concentrate on the first definition,
which immediately connects to the basic physical idea 
of ``inertial motion'',  i.e. the motion of force-free 
extensionless test particles,  which is represented by 
the straight-lines in affine space.  We recall that an 
$n$-dimensional affine space is a triple $(M,V,+)$,
where $M$ is a set, $V$ is an $n$-dimensional real 
vector space and the familiar symbol $+$ denotes an 
action of $V$ -- considered as abelian group -- on 
the set $M$ which is simply transitive. The word 
``action'' means that there is a map 
$M\times V\rightarrow M$ denoted by $(m,v)\mapsto m+v$ 
such that $m+0=m$ for all $m\in M$ and 
$m+(v+v')=(m+v)+v'$ . Note that in this last equation 
the first $(+)$-sign on the left-hand side denotes 
$V$'s action on $M$ whereas the second denotes 
vector addition. On the right-hand side 
both $(+)$-signs denote $V$'s action on $M$. 
The word ``transitive'' means that for each 
pair $(m,m')\in M\times M$ there exists a $v\in V$
such that $m'=m+v$. The word ``simply'' then further 
requires that this $v\in V$ be unique.
Given that, we may extend our suggestive notation 
and write $v=m'-m$ for that unique element $v\in V$ 
which, when acting on $m$, gives $m'$. Hence we 
may write equations like $(m'-p)+(p-m)=m'-m$ or 
$m+(m'-p)=m'+(m-p)$, valid for all $m',m,p$ in $M$.

To sum up, we may ``add'' vectors to points in affine 
space to get new point in affine space, and also 
``subtract'' two points in affine space to get a 
vector. These two operations behave in a natural 
``associative'' way as exemplified by the equations 
above. But note that points in affine space cannot 
be added. It is true that every point $o\in M$ 
defines a bijection 
\begin{equation}
\label{eq:BijectionM-V}
\phi_o:M\rightarrow V\,,\quad
       m\mapsto \phi_o(m):=m-o\,.
\end{equation}
One might use $\phi_o$ to pull back the linear
 structure from $V$ to $M$, so that ``addition'' in 
$M$ could be defined by  
$m+m':=\phi^{-1}_o\bigl(\phi_o(m)+\phi_o(m')\bigr)=o+(m-o)+(m'-o)$.
 However, the result of ``adding'' $m$ to $m'$ in 
this fashionthen then depends on the choice of $o$. 
In fact, any $p\in M$ could be made the result of 
adding $m$ to $m'$: Just choose $o=m+(m'-p)=m'+(m-p)$.  

\subsection{Affine automorphisms and affine groups}
\label{sec:Appendix-AffineAutGroups}
An affine automorphisms of an affine space is a bijection
$F:M\rightarrow M$ of the following form: There exists a 
point $m\in M$ and a map  $f\in\mathrm{GL}(V)$ (here and 
below $\mathrm{GL}(V)$ denotes the general linear group 
of $V$, i.e. the group of all linear invertible self-maps 
of $V$)), such that $F$ has the form    
\begin{equation}
\label{eq:DefAffineAutom}
F(m+v)=F(m)+f(v)
\end{equation}
It immediately follows that once \eqref{eq:DefAffineAutom}
is true for some $m$ and some $f$ then it also holds for 
all $m\in M$ with the same $f$. Hence $f$ is uniquely 
determined by $F$. Therefore, if $m'+v'=m+v$, we may either 
write $F(m')+f(v')$ or $F(m)+f(v)$ for the image point. 
 
Affine maps form a group under composition which we 
denote by $\Aff(M,V,+)$. Given two affine maps $F$ 
and $F'$ with associated linear maps $f$ and $f'$, the 
composition $F'\circ F$ has associated linear map $f'\circ f$
as is easily checked. Moreover, if $F=\id_M$
then $f=\id_V$. Hence the map
\begin{equation}
\label{eq:ProjectionAff-GL}
\begin{split}
\pi:\Aff(M,V,+)&\rightarrow \GL(V)\\
F&\mapsto \pi(F)=f\,,
\end{split}
\end{equation} 
which assigns to any affine map $F$ its associated linear 
map $f$, is a surjective homomorphism of groups. The kernel 
of $\pi$ is given by an abelian normal subgroup 
$\Trans(M,V,+)\subset Aff(M,V,+)$ which one calls the 
translations of $M$ and which is naturally isomorphic with 
$V$. The quotient $\Aff(M,V,+)/\Trans(M,V,+)$ is isomorphic 
to $\GL(V)$. In fact, $\Aff(M,V,+)$ is isomorphic to the 
semi-direct product $V\rtimes\GL(V)$ in which $\GL(V)$ 
acts as automorphism of the group $V$ by its defining 
representation. This means that  for $(a,A)$ and $(b,B)$ in 
the set $V\times\GL(V)$, their multiplication and inversion 
in $V\rtimes\GL(V)$ is 
\begin{subequations}
\label{eq:DefSemiDirectProduct}
\begin{alignat}{1}
\label{eq:DefSemiDirectProduct-a}
(a,A)(b,B)\,&=\,(a+Ab,A\circ B)\,,\\
\label{eq:DefSemiDirectProduct-b}
(a,A)^{-1}\,&=\,(-A^{-1}a,A^{-1})\,.
\end{alignat}
\end{subequations}

It is important to understand how this isomorphism between 
$\Aff(M,V,+)$ and  $V\rtimes\GL(V)$ comes about, because it 
is not natural due to the fact that it depends on the choice 
of an auxiliary point $o\in M$. Given such a point $o$, a
bijection $\Phi_o$ between the sets $\Aff(M,V,+)$ and 
$V\times\GL(V)$ can be obtained by a combination of the maps \eqref{eq:BijectionM-V} and \eqref{eq:ProjectionAff-GL}
as follows:
\begin{equation}
\label{eq:IsomAff-SDP}
\begin{split}
\Phi_o:\Aff(M,V,+)&\rightarrow V\times\GL(V)\\
F&\mapsto\Phi_o(F)
:=\Bigl(\phi_o\bigl(F(o)\bigr)\,,\,\pi(F)\Bigr)\,.
\end{split}
\end{equation}
Given two elements $F,F'$ in $\Aff(M,V,+)$, we write 
$\Phi_o(F)=(v,f)$ and $\Phi_o(F')=(v',f')$ and have 
$\Phi_o(F'\circ F)=\bigl(v'+f'(v)\,,\,f'\circ f\bigr)$,
as a short computation shows. This means that the 
bijection \eqref{eq:IsomAff-SDP} becomes an isomorphism 
of groups if $V\times \GL(V)$ is endowed with the semi-direct 
product structure \eqref{eq:DefSemiDirectProduct}. 
Note that both, $V$ as well as $\GL(V)$, are subgroups of 
$V\times \GL(V)$. $V$ is the invariant subgroup 
$\mathrm{kernel}(\pi)$ and $\GL(V)$ is the (non-invariant) 
subgroup of those elements fixing the point $o\in M$. 
Hence, in contrast to the translations, the linear part   
$GL(V)$ has no natural place in $\Aff(M,V,+)$.

\subsection{Affine bases and affine charts}
\label{sec:Appendix-AffineBasesCharts}
An \emph{affine basis} $B$ of an affine space consists of a 
tuple $B=(o,b)$, where $o\in M$ and $b=\{e_1,\cdots e_n\}$ 
is a basis of $V$. It defines a bijection of sets 
$\phi_B: M\rightarrow\reals^n$ through $\phi_B(m):=\bigl(x^1(m),\cdots,x^n(m)\bigr)$, 
where $x^a(m):=\theta^a(m-o)$. The inverse map 
$\phi^{-1}_B:\reals^n\rightarrow M$ is given by 
$\phi^{-1}_B(x^1,\cdots, x^n)=o+x^ae_a$. 
The bijection $\phi_B$ becomes a topological homeomorphism 
if the topology given to $M$ is the initial one, i.e. the 
weakest (coarsest) to make $\phi_B$ continuous (the continuity 
of $\phi^{-1}_B$ follows from the invariance-of-domain theorem). 
Moreover, $\phi_B$ defines a global chart (i.e. an atlas with a 
single chart) which can be used to endow  $M$ with the structure 
of a differentiable manifold. We call such a chart an 
\emph{affine chart}. Given two different affine bases, $B=(o,b)$ 
and $B'=(o',b')$, the transition function between the corresponding 
affine charts is   
\begin{subequations}
\label{eq:AffineChartChange}
\begin{equation}
\label{eq:AffineChartChange-a}
\phi_{BB'}:=\phi_B\circ\phi^{-1}_{B'}:
\reals^n\rightarrow\reals^n\,,
\end{equation}
whose $a$-th component is 
\begin{equation}
\label{eq:AffineChartChange-b}
\phi^a_{BB'}(x^1,\cdots,x^n)=A^a_bx^b+a^a\,,
\end{equation}
\end{subequations}
where $a^a:=\theta^a(o'-o)$ and $A^a_b=\theta^a(e'_b)$, i.e. 
$e_b=A^a_be_b$.  

\subsection{Passive versus active transformations}
\label{sec:Appendix-PassiveActive}
In the physics literature these transition function between 
different affine charts are sometimes referred to as 
``passive transformations''. In contrast, an 
``active transformation'' then refers to an affine map of 
$M$ represented in terms of coordinates with respect 
to one and the same chart. So given an affine map $F$ and 
a single affine chart $\phi_B$, we have 
\begin{subequations}
\label{eq:AffineMapCoordinates}
\begin{equation}
\label{eq:AffineMapCoordinates-a}
F_B:=\phi_B\circ F\circ\phi_B^{-1}\,:\,
\reals^n\rightarrow \reals^n\,,
\end{equation}
whose $a$-th component is 
\begin{equation}
\label{eq:AffineMapCoordinates-b}
F^a_B(x^1,\cdots,x^n)=A^a_bx^b+a^a\,,
\end{equation}
\end{subequations}
where $a^a:=\theta^a\bigl(F(o)-o\bigr)$ 
and $A^a_b:=\theta^a\bigl(f(e_b)\bigr)$, i.e.
$f(e_b)=A^a_be_a$.
The analytic similarity of the right-hand sides of 
\eqref{eq:AffineChartChange-b} and 
\eqref{eq:AffineMapCoordinates-b} should not deceive 
us about their different meanings. Whereas 
\eqref{eq:AffineChartChange-b} relates coordinate 
representatives of the same point in different affine 
charts, \eqref{eq:AffineMapCoordinates-b} relates 
coordinate representatives of source and image point 
under an affine map in one and the same affine chart. 
This remark extends to general tensor fields, where 
one can either relate coordinate representatives of 
the same field in different affine charts, or 
coordinate representatives of different fields in 
the same affine chart. Here the ``different fields''
are related by the action of the affine group 
(or subgroup thereof) on the particular bundle in which 
the field is a section; compare 
Appendix~\ref{sec:Appendix-NaturalBundles}.
Note that changing the chart will result in a change 
of the coordinate representatives of \emph{all} fields, 
whereas we may choose to let affine transformation just 
act on any proper subset of fields.  Also, on general 
manifolds, passive transformations will not form a 
group unless all charts have the same domain.

\subsection{Poincar\'e group as metric preserving 
affine group}
\label{sec:Appendix-PoincareGroup}
Finally we mention the definition of the Poincar\'e
group, which we embed into a more general setting. 
Assume $V$ to be endowed with a ``metric'' $g\in V^*\otimes V^*$, 
that is, a non-degenerate symmetric bilinear form of 
any signature. This defines the generalised orthogonal 
subgroup, 
\begin{equation}
\label{eq:GenOrthogGroup}
\mathrm{O}(V,g):=\{A\in\GL(V) : 
g(Av,Aw)=g(v,w)\,\forall v,w\in V\}\,.
\end{equation}
The \emph{inhomogeneous generalised orthogonal group}, 
which we denote by $\mathrm{IO}(M,V,+,g)$, is then 
defined as the preimage in $\Aff(M,V,+)$ of 
$\mathrm{O}(V,g)\subset\GL(V)$ under the homomorphism \eqref{eq:ProjectionAff-GL}:
\begin{equation}
\label{eq:GenInhomogOrthogGroup}
\mathrm{IO}(M,V,+,g):=\pi^{-1}\bigl(\mathrm{O}(V,g)\bigr)\subset\Aff(M,V,+)\,.
\end{equation}
$\mathrm{IO}(M,V,+,g)$ is the automorphism group of the 
structure $(M,V,+,g)$ that comprises the affine structure 
$(M,V,+)$ with the metric structure $g$ on $V$. 
$\mathrm{IO}(M,V,+,g)$ is (not naturally) isomorphic to 
$V\rtimes O(V,g)$, as is immediately seen by restricting 
any of the isomorphisms $\Phi_o$ in \eqref{eq:IsomAff-SDP} to 
$\mathrm{IO}(M,V,+,g)\subset\Aff(M,V,+)$. Now, if $g$ is of 
signature $(1,(n-1))$ [or $((n-1),1)$], in which case we 
write $\eta$ instead of $g$, then $\mathrm{O}(V,g=\eta)$ 
is called the Lorentz group $\Lor$, and $\mathrm{IO}(M,V,+,g=\eta)$ 
is called the Poincar\'e group, $\Poin$, in $n$ dimensions. 
Hence $\Poin:=\pi^{-1}\bigl(O(V,g)\bigr)$.

\section{Spacetime: General conventions and notation}
\label{sec:Appendix-ConventionsNotation}
\subsection{The notion of spacetime}
\label{sec:Appendix-Spacetime}
In physics the notion of ``spacetime'' is usually 
taken to imply a tuple $(M,g)$ in which the first entry,
$M$,  is a  $n\geq 3$-dimensional (usually $n=4$) 
differentiable manifold which is connected, Hausdorff, 
paracompact, and orientable. The second entry, 
$g$, represents a Lorentzian metric with respect to which 
$M$ is time orientable and usually (but not always) 
assumed to be globally hyperbolic. These latter two 
properties refer to the pair $(M,g)$ and are discussed 
in a little more detail below. Let us begin by discussing 
aspects of $M$ alone.    

\subsection{Some obvious bundles over manifolds}
\label{sec:Appendix-BundlesOverManifolds}
As usual, the tangent and cotangent spaces at a point 
$m\in M$ are denoted by $T_mM$ and $T^*_mM$, respectively 
and the tangent and cotangent bundles by $TM:=\bigcup_{m\in M}T_mM$ 
and $T^*M:=\bigcup_{m\in M}T^*_mM$, respectively. The tensor 
product of the $p$-fold tensor product of $T_mM$ with the 
$q$-fold tensor product of $T_m^*M$ is called 
$T_{m\,q}^{\phantom{m}p}M$. It is said to consist of $p$-fold 
contravaraint and $q$-fold covariant tensors at $m\in M$. 
The bundle of such tensors is denoted by $T^p_qM:=\bigcup_{m\in M}T_{m\,q}^{\phantom{m}p}M$. The symbols $\otimes$, $\vee$, 
and $\wedge$ denote the tensor product, the symmetrised, 
and antisymmetrised tensor product, respectively. 
We shall mainly use the antisymmetrised tensor product which is 
explicitly defined in \eqref{eq:DefWedgeProd}. The symmetrised 
product is defined analogously just without the 
sign-function $sign(\sigma)$ in \eqref{eq:DefAlt-Operator}.  
We shall also make use of the inner-product structure that 
the antisymmetric tensor product inherits from that of its 
factors. Since here various conventions concerning 
combinatorial factors as well as signs enter, which will be 
important to keep in mind in order to compare formulae 
from different sources, we have collected all our conventions 
and constructions in a self-contained exposition in 
Appendix~\ref{sec:Appendix-MultilinearAlgebra}. 
This includes in particular 
the Hodge duality map, that we shall denote as usual by 
$\star$. The $p$-fold antisymmetric tensor product of 
$T^*_mM$ is denoted by $\Wedge^pT^*_mM$ and the bundle of 
such forms accordingly by $\Wedge^pT^*M$. A section
$\Lambda\in\Section\bigl(\Wedge^pT^*M\bigr)$ is called
\emph{closed} if its exterior differential vanishes:
 $d\Lambda=0$. It is called \emph{exact} if $\Lambda=d\lambda$
for some  
$\lambda\in\Section\bigl(\Wedge^{p-1}T^*M\bigr)$

\subsection{Natural bundles}
\label{sec:Appendix-NaturalBundles}
If $B$ denotes any bundle over $M$, including all the 
ones mentioned so far, we denote by $\Section(B)$ the 
linear space of all its smooth (say $C^\infty$) sections. 
Dynamical laws in physics are differential equations that 
elements of $\Section(B)$ have to satisfy. In this paper we 
are interested in \emph{symmetries} of such equations, 
that is, operations on $\Section(B)$ under which the 
subset of solutions stays invariant (as set). Of particular 
interest for us are symmetries that derive from symmetries 
of spacetime $(M,g)$, e.g., an action $\Phi$ of a 
finite-dimensional Lie group $G$ on $M$ in the sense of \eqref{eq:IsometricG-Action}. In order to meaningfully ask 
whether this action defines a symmetry of a differential 
equation on  $\Section(B)$ we must first know how the 
action $\Phi$ lifts to an action $\hat\Phi$ on 
$\Section(B)$, i.e. a homomorphism 
$\hat\Phi:G\rightarrow\mathrm{Diff}(B)$ such that for all
$g\in G$ we have $\pi\circ\hat\Phi_g=\Phi_g\circ\pi$, 
where $\pi:B\rightarrow M$ denotes the bundle projection. 
Now, a bundle $B$ is called \emph{natural} if such a lift 
exists naturally for all $\Phi\in\mathrm{Diff}(M)$~\cite{Kolar.etal:NatOpDiffGeom}. 
This is, e.g.,  the case for all bundles discussed above: 
Just take the push-forward of $\Phi$ on vector fields, the 
pull-back of  $\Phi^{-1}$ on form fields, and extend naturally 
to tensor products of those. Generally it is true that $\Phi$ 
naturally lifts from $M$ to the principal bundle $B=FL(M)$ 
of linear frames (by push-forward of each frame) and hence 
to all bundles associated to $FL(M)$. In particular this 
implies that the Lie derivative is defined for all sections 
in natural bundles. On the other hand, there exist physically 
relevant bundles which are not natural in that sense, like, e.g., 
the principal bundle of orthogonal (with respect to some metric 
$g$) frames $FO(M,g)$ and its associated bundles, and also the 
double cover $\overline{FO}(M,g)$, an important associated bundle 
of which is the bundles of spinors. In these cases natural lifts 
exist only for the subgroup of isometries $(M,g)$ within 
$\mathrm{Diff}(M)$. This implies that the \emph{ordinary} Lie 
derivative of spinor fields exists only with respect to vector 
fields generating isometries (Killing fields). For our purpose, 
in particular in connection with the discussion of Section\,\ref{sec:GroupActionsAndCharges}, this is sufficient. 
But we mention that there exist generalisations of the 
ordinary notion of a Lie derivative that do apply to spinor 
fields such that derivatives with respect to \emph{all} vector 
fields make sense. Compare, e.g., the construction in \cite{Bourguignon.Gauduchon:1992} and the more systematic
treatment in \cite{Fatibene.Francaviglia:DiffGeom} within the 
framework of gauge theories.

\subsection{Lorentzian manifolds}
\label{sec:Appendix-LorentzianManifolds}
As already anticipated above, we shall endow $M$ with a 
Lorentzian metric $g\in\Section(T^*M\vee T^*M)$. That  
means that $g(m)\in T_m^*M\otimes T_m^*M$ is a non-degenerate 
symmetric bilinear form  of signature $(1,-1,\cdots,-1)$
at each piont $m$. There is a slight conventional ambiguity 
here in that sometimes the opposite signature  $(-1,1,\cdots,1)$
is taken, which for obvious reasons is referred to as 
the ``mostly plus'' convention. In this paper we stick to the 
``mostly minus'' convention. We assume $(M,g)$ to be globally 
hyperbolic, which implies that $M$ is diffeomorphic to 
$\reals\times\Sigma$, where $\Sigma$ is a connected 
$(n-1)$-dimensional Riemannian manifold with Riemannian 
structure $h=-g\vert_{T\Sigma}$. We also assume $(M,g)$
to be time-orientable, which means that there exists 
a vector field $X\in\Section(TM)$ which is 
timelike (i.e. $g(X,X)>0$; hence, in particular, nowhere
vanishing). At each $m\in M$ the vector $X_m\in T_mM$
selects one of the two components of the so-called 
``chronological cone'' 
$\mathcal{C}_m:=\{X\in T_mM:g(X,X)>0\}\subset T_mM$, 
which one may call the \emph{future} component. 

An orientation for $M$ is picked by specifying a volume 
form $\varepsilon\in\Section\left(\Wedge^n T^*M\right)$.
Relative to that choice, a basis $\{e_0(m),\cdots,,e_{n-1}(m)\}$ of 
$T_mM$ is then defined to be positively oriented, if and only if 
$\varepsilon_m\bigl(e_0(m),\cdots,e_{n-1}(m)\bigr)>0$. Let $\{e_\alpha\in\Section(TM): \alpha =0,\cdots, n-1\}$ 
be a basis at each point (such global bases exist, due to 
$M=\reals\times\Sigma$, if $\Sigma$ is parallelizable, which 
is automatic in $n=4$ dimensions, i.e. if $\Sigma$ is 3-dimensional)  
and $\bigl\{\theta^\alpha\in\Section(TM): \alpha =0,\cdots,n-1\bigr\}$ 
be its dual basis, i.e. $\theta^\alpha(e_\beta)=\delta^\alpha_\beta$. 
Then we call this dual pair of bases \emph{adapted} if 
they are orthonormal in the sense of $g(e_\alpha,e_\beta)=\eta_{\alpha,\beta}=\mathrm{diag}(1,-1,\cdots,-1,)$, 
positively oriented in the sense of $\varepsilon(e_0,\cdots, e_{n-1})>0$, 
and time oriented in the sense of $g(X,e_0)>0$. The volume form 
$\varepsilon$ may then be identified with the unique element in 
$\Section\left(\Wedge^n T^*M\right)$ which assigns the value 1 
(unit volume) to any adapted basis: $\varepsilon(e_0,\cdots,e_{n-1})=1$; hence
\begin{equation}
\label{eq:UniqueMeasure}
\varepsilon=
\theta^0\wedge\cdots\wedge\theta^{n-1}\,.
\end{equation}

\section{Definitions and convention in 
multilinear algebra}
\label{sec:Appendix-MultilinearAlgebra}
In this appendix we collect our definitions and conventions
in a self-contained way, so as to ease comparison with the 
main text and also provide some explanatory material.

\subsection{Vector spaces and the exterior 
algebra of its dual space}
\label{sec:Appendix-VectorSpaces}
Let $V$ be a real $n$-dimensional vector space, $V^*$ its 
dual space and $T^pV^*=V^*\otimes\cdots\otimes V^*$ its 
$p$-fold tensor product.\footnote{We follow standard tradition 
to define \emph{forms}, i.e. the antisymmetric tensor 
product on the dual vector space $V^*$ rather than 
on $V$. Clearly, all constructions that are to follow could 
likewise be made in terms if $V$rather than $V^*$.}
$T^pV^*$ carries a representation $\pi_p$ of $S_p$, the 
symmetric group (permutation group) of $p$ objects, 
given by 
\begin{equation}
\label{eq:SymmetricGroupAction}
\pi_P:S_p\rightarrow\End(T^pV^*),\quad
\pi_p(\sigma)\bigl(\alpha_1\otimes\cdots\otimes \alpha_p\bigr):=
\alpha_{\sigma(1)}\otimes\cdots\otimes \alpha_{\sigma(p)}
\end{equation}
and linear extension to sums of tensor products. 
On  $T^pV^*$ we define the linear operator of 
antisymmetrisation by  
\begin{equation}
\label{eq:DefAlt-Operator}
\Alt_p:=\frac{1}{p!}\sum_{\sigma\in S_p}\mathrm{sign(\sigma)\,\pi_p}\,,
\end{equation} 
where $\mathrm{sign}:S_p\rightarrow\{1,-1\}\cong\integers_2$ is
the sign-homomorphism. This linear operator is idempotent (i.e. 
a projection operator) and its image of $T^pV^*$ under $\Alt_p$ 
is the subspace of totally antisymmetric tensor-products. 
We write
\begin{equation}
\label{eq:DefExtProductSpace}
\pi_p\bigl(T^pV^*\bigr)=:\Wedge^pV^*\,.
\end{equation} 
Clearly 
\begin{equation}
\label{eq:DimExtProductSpace}
\dim\left(\Wedge^pV^*\right)=
\begin{cases}
\binom{n}{p} & \mathrm{for}\ p\leq n\,,\\
0&\mathrm{for}\ p>n\,.
\end{cases}
\end{equation} 
We set 
\begin{equation}
\label{eq:DefExtAlgebra}
\Wedge V^*:=\bigoplus_{p=0}^n\Wedge^p V^*\,.
\end{equation}
Let $\alpha\in\Wedge^pV^*$ and $\beta\in\Wedge^qV^*$, then we define 
their antisymmetric tensor product 
\begin{equation}
\label{eq:DefWedgeProd}
\alpha\wedge\beta:=\tfrac{(p+q)!}{p!q!}\,\mathrm{Alt}_{p+q}(\alpha\otimes\beta)\in\Wedge^{p+q}V^*\,. 
\end{equation}
One easily sees that 
\begin{equation}
\label{eq:WedgeProdSymm}
\alpha\wedge\beta=(-1)^{pq}\,\beta\wedge\alpha\,.
\end{equation}
Bilinear extension of $\wedge$ to all of $\Wedge V^*$ 
endows it with the structure of a real $2^n$-dimensional 
associative algebra, the so-called exterior algebra over $V^*$.   
If $\alpha_1,\cdots,\alpha_p$ are in $V^*$, we have  
\begin{equation}
\label{eq:WedgeOfOneForms}
\alpha_1\wedge\cdots\wedge \alpha_p
=\sum_{\sigma\in S_p}\text{sign}(\sigma)\,
\alpha_{\sigma(1)}\otimes\cdots\otimes \alpha_{\sigma(p)}\,,
\end{equation}
as one easily shows from \eqref{eq:DefWedgeProd} 
and \eqref{eq:WedgeProdSymm} using induction. 

If $\{\theta^1,\cdots, \theta^n\}$ is a basis of $V^*$, 
a basis of $\Wedge^p V^*$ is given by the following 
$\binom{n}{p}$ vectors
\begin{equation}
\label{eq:InducedBasisOnExtAlg}
\{\theta^{a_1}\wedge\cdots\wedge \theta^{a_p}
\mid 1\leq a_1<a_2<\cdots<a_p\leq n\}\,. 
\end{equation}
An expansion of $\alpha\in\Wedge^p V^*$ in this basis is 
written as follows  
\begin{equation}
\label{eq:WedgeProdExpansion}
\alpha=:\tfrac{1}{p!}\,
\alpha_{a_1\cdots a_p}\,\theta^{a_1}\wedge\cdots\wedge \theta^{a_p}\,,
\end{equation}
using standard summation convention and where the coefficients 
$\alpha_{a_1\cdots a_p}$ are totally antisymmetric in all indices. 
On the level of coefficients, \eqref{eq:DefWedgeProd} reads 
\begin{equation}
\label{eq:WedgeProdCoeff}
(\alpha\wedge\beta)_{a_1\cdots a_{p+q}}=\tfrac{(p+q)!}{p!q!}
\alpha_{[a_1\cdots a_p}\beta_{a_{p+1}\cdots a_{p+q}]}\,,
\end{equation} 
where square brackets denote total antisymmetrisation in all 
indices enclosed: 
\begin{equation}
\label{eq:VollsAntismmIndizes}
\alpha_{[a_1\cdots a_p]}
:=\frac{1}{p!}\sum_{\sigma\in S_p}\mathrm{sign}(\sigma)\
\alpha_{a_{\sigma(1)}\cdots a_{\sigma(p)}}\,.
\end{equation}

\subsection{Inner products on vector spaces and 
their extension to the dual space and its exterior 
algebra}
\label{sec:Appendix-InnerProducts}
Suppose there is an inner product (non-degenerate symmetric 
bilinear form) $g$ on $V$, i.e., 
$g:V\times V\rightarrow\reals$; $(v,w)\mapsto g(v,w)$. 
We do not need to restrict to any specific signature 
of $g$ which we therefore leave open. If the signature 
is Lorentzian we shall use the letter $\eta$ instead 
of $g$, as sometimes done in the main text. A non 
generate $G$ defines an isomorphism $\gdown: V\rightarrow V^*$ 
though $\gdown(v):=g(v,\cdot)$. The inverse map is called 
$\gup$, i.e., $\gup:=(\gdown)^{-1}: V^*\rightarrow V$. 
Hence the inner product $g$ on $V$ defines in a natural 
fashion an inner product on the dual space $V^*$, which we 
call $g^{-1}:V^*\times V^*\rightarrow\reals$; 
$(\alpha,\beta)\mapsto g^{-1}(\alpha,\beta)$. It is defined 
in the obvious way by first mapping $\alpha$ and $\beta$ into 
$V$ using $\gup$, and then taking the inner product using 
$g$. Hence $g^{-1}:=g\circ(\gup\times\gup)$, that is, 
$g^{-1}(\alpha,\beta)=g\bigl(\gup(\alpha),\gup(\beta)\bigr)$.
From that it immediately follows that $\gup(\beta)=g^{-1}(\cdot,\beta)$,
if we identify $V^{**}\cong V$ by the natural isomorphism 
$i:V\rightarrow V^{**}$; $i(v)(\alpha):=\alpha(v)$.
The maps $\gup$ and $\gdown$ are often called
the ``index-raising'' and ``index lowering'' map, respectively,
or simply the ``musical isomorphisms''. 
The reason for these names is as follows: Let $\{e_1,\cdots, e_n\}$
$\{\theta^1,\cdots,\theta^n\}$ be dual pairs of bases for $V$ 
and $V^*$ respectively, and $g(e_a,e_b)=:g_{ab}$,
$g^{-1}(\theta^a,\theta^b)=:g^{ab}$. Then 
$\gdown(v^be_b)=v_a\theta^a$ and 
$\gup(\alpha_b\theta^b)=\alpha^ae_a$ with $v_a:=v^b g_{ba}$
and $\alpha^a:=g^{ab}\alpha_b$; also 
$g^{ac} g_{bc}=g^{ca} g_{cb}=\delta^a_b$, which 
justifies the notation $g^{-1}$, since for 
symmetric $g$ the matrix $\{g^{ab}\}$ is the inverse 
of the matrix $\{g_{ab}\}$ (for non-symmetric $g$ it is 
the transposed inverse). Even though not relevant here, 
we remark that all relations in this paragraph 
are deliberately written in a way that remains valid 
for non-symmetric $g$. 
To end this paragraph we also remark that if the 
metric with respect to which index lowering and raising 
is clear from the context, one often writes (so-calles
``musical isomorphisms'') 
\begin{subequations}
\label{eq:DefMusicalIsomorphisms}
\begin{alignat}{2}
\label{eq:DefMusicalIsomorphisms-a}
&v^\flat&&\,:=\,\gdown(v)=g(v,\cdot)\,,\\
\label{eq:DefMusicalIsomorphisms-b}
&\alpha^\sharp&&\,:=\,\gup(\alpha)=g^{-1}(\cdot,\alpha)\,.
\end{alignat}
\end{subequations}
We make use of the notation 
\eqref{eq:DefMusicalIsomorphisms-a}
throughout the main text.

The inner product $g^{-1}$ on $V^*$ extends to an inner product 
on each $T^pV^*$ by 
\begin{equation}
\label{eq:DefInnProdTensors}
\bigl\langle
\alpha_1\otimes\cdots\otimes \alpha_p,\beta_1\otimes\cdots\otimes \beta_p
\bigr\rangle:=
\prod_{a=1}^p g^{-1}(\alpha_a,\beta_a)\
\end{equation}
and bilinear extension:
\begin{equation}
\label{eq:DefInnProdTensorsLinExt}
\bigl\langle
\alpha_{a_1\cdots a_p}\ \theta^{a_1}\otimes\cdots\otimes\theta^{a_p}\,,\,
\beta_{b_1\cdots b_p}\ \theta^{b_1}\otimes\cdots\otimes\theta^{b_p}\bigr\rangle=
\alpha_{a_1\cdots a_p}\beta^{a_1\cdots a_p}\,.
\end{equation}
In particular, it extends to each subspace $\Wedge^p V^*\subset T^pV^*$. 
We have 
\begin{equation}
\label{eq:InnProdPureWedges}
\bigl\langle
\alpha_1\wedge\cdots\wedge \alpha_p\,,\,\beta_1\wedge\cdots\wedge\beta_p
\bigr\rangle
:=p!\sum_{\sigma\in S_p}\mathrm {sign}(\sigma)\prod_{a=1}^p g^{-1}(\alpha_a,\beta_{\sigma(a)})
\end{equation}
and hence 
\begin{equation}
\label{eq:InnProdGenForms}
\Bigl\langle
\tfrac{1}{p!}\alpha_{a_1\cdots a_p}\theta^{a_1}\wedge\cdots\wedge\theta^{a_p}\,,\,
\tfrac{1}{p!}\beta_{b_1\cdots b_p}\theta^{b_1}\wedge\cdots\wedge\theta^{b_p}
\Bigr\rangle
=\alpha_{a_1\cdots a_p}\beta^{a_1\cdots a_p}\,.
\end{equation}

In the totally antisymmetric case it is more convenient to 
renormalise this product in a $p$-dependent fashion. One 
sets 
\begin{equation}
\label{eq:RenormInnerProduct}
\bigl\langle\cdot\,,\,\cdot\big\rangle_{\mathrm{norm}}\big\vert_{\Wedge^pV^*}
:=\tfrac{1}{p!}\,\bigl\langle\cdot\,,\,\cdot\big\rangle\big\vert_{\Wedge^pV^*}
\end{equation}
so that 
\begin{equation}
\label{eq:RenormInnProdGenForms}
\Bigl\langle
\tfrac{1}{p!}\alpha_{a_1\cdots a_p}\theta^{a_1}\wedge\cdots\wedge\theta^{a_p}\,,\,
\tfrac{1}{p!}\beta_{b_1\cdots b_p}\theta^{b_1}\wedge\cdots\wedge\theta^{b_p}
\Bigr\rangle_{\mathrm{norm}}
=\tfrac{1}{p!}\alpha_{a_1\cdots a_p}\beta^{a_1\cdots a_p}\,.
\end{equation}

\subsection{Hodge duality}
\label{sec:Appendix-HodgeDuality}
Given a choice $o$ of an orientation of $V^*$ (e.g. induced by an 
orientation of $V$), there is a unique top-form 
$\varepsilon\in\Wedge^nV^*$ (i.e. a \emph{volume form} for 
$V$), associated with the triple $(V^*,g^{-1},o)$, given by 
\begin{equation}
\label{eq:TopForm}
\varepsilon:=\theta^1\wedge\cdots\wedge\theta^n\,,
\end{equation}
where $\{\theta^1,\cdots,\theta^n\}$ is any 
$g^{-1}$-orthonormal Basis of $V^*$ in the 
orientation class $o$. The \emph{Hodge duality} 
map at level $0\leq p\leq n$ is a linear isomorphism 
\begin{subequations}
\label{eq:DefHodgeDuality}
\begin{equation}
\label{eq:DefHodgeDuality-a}
\star_p:\Wedge^pV^*\rightarrow\Wedge^{n-p}V^*\,,
\end{equation}
defined implicitly by 
\begin{equation}
\label{eq:DefHodgeDuality-b}
\alpha\wedge\star_p \beta=\varepsilon\,\langle\alpha\,,\,
\beta\rangle_{\mathrm{norm}}\,.
\end{equation}
\end{subequations}
This means that the image of $\beta\in\Wedge^pV^*$ under 
$\star_p$ in $\Wedge^{n-p}V^*$ is defined by the 
requirement that \eqref{eq:DefHodgeDuality-b} holds 
true for all $\alpha\in\Wedge^pV^*$. Linearity is immediate 
and uniqueness of $\star_p$ follows from the fact that 
if $\lambda\in\Wedge^{n-p}V^*$ and $\alpha\wedge\lambda=0$ for all 
$\alpha\in\Wedge^pV^*$, then $\lambda=0$. To show existence it 
is sufficient to define $\star_p$ on basis vectors. 
Since \eqref{eq:DefHodgeDuality-b} is also linear in 
$\alpha$ it is sufficient to verify \eqref{eq:DefHodgeDuality-b} 
if $\alpha$ runs through all basis vectors. 

From now on we shall follow standard practice and drop the 
subscript $p$ on $\star$, supposing that this will not cause 
confusion.  

Let $\{e_1,\cdots e_n\}$ be a basis of $V$ and 
$\{\theta^1,\cdots ,\theta^n\}$ its dual basis of $V^*$;
i.e. $\theta^a(e_b)=\delta^a_b$. Let further 
$\{\theta_{1},\cdots ,\theta_n\}$ be the basis of $V^*$
given by the image of $\{e_1,\cdots e_n\}$ under 
$\gdown$, i.e. $\theta_a=g_{ab}\theta^b$. Then, on the basis 
$\{\theta_{a_1}\wedge\cdots\wedge\theta_{a_p}\mid 1\leq a_1<a_2<\cdots <a_p\leq n\}$
of $\Wedge^pV^*$ the map $\star$ has the simple form   
\begin{equation}
\label{eq:StarMapOnBasis1}
\star(\theta_{b_1}\wedge\cdots\wedge\theta_{b_p})=\tfrac{1}{(n-p)!}
\varepsilon_{b_1\cdots b_p\,a_{p+1}\cdots a_n}\,\theta^{a_{p+1}}\wedge\cdots\wedge\theta^{a_n}\,.
\end{equation}
This is proven by merely checking \eqref{eq:DefHodgeDuality-b}
for $\alpha=\theta^{a_1}\wedge\cdots\wedge\theta^{a_p}$ and 
$\beta=\theta_{b_1}\wedge\cdots\wedge\theta_{b_p}$. Instead of 
\eqref{eq:StarMapOnBasis1} we can write
\begin{alignat}{1}
\label{eq:StarMapOnBasis2}
\star(\theta^{a_1}\wedge\cdots\wedge\theta^{a_p})
=&\tfrac{1}{(n-p)!}\
g^{a_1b_1}\cdots g^{a_pb_p}\
\varepsilon_{b_1\cdots b_pb_{p+1}\cdots b_n}\,
\theta^{b_{p+1}}\wedge\cdots\wedge\theta^{b_n}\nonumber\\
=&\tfrac{1}{(n-p)!}\
\varepsilon^{a_1\cdots a_p}_{\phantom{a_1\cdots a_p}a_{p+1}\cdots a_n}
\ \theta^{a_{p+1}}\wedge\cdots\wedge\theta^{a_n}\,,
\end{alignat}
which makes explicit the dependence on $\varepsilon$ 
and $g$.

If $\alpha=\frac{1}{p!}\alpha_{a_1\cdots a_p}\theta^{a_1}\wedge\cdots\wedge
\theta^{a_p}$, then $\star\alpha=\tfrac{1}{(n-p)!}
(\star\alpha)_{b_1\cdots b_{n-p}}
\theta^{b_1}\wedge\cdots\wedge\theta^{b_{n-p}}$, where  
\begin{equation}
\label{eq:StarInCoordinates}
(\star\alpha)_{b_1\cdots b_{n-p}}=
\tfrac{1}{p!}\,\alpha_{a_1\cdots a_p}
\varepsilon^{a_1\cdots a_p}_{\phantom{a_1\cdots a_p}b_1\cdots b_{n-p}}\,.
\end{equation}
This gives the familiar expression of Hodge duality in component 
language. Note that on component level the first (rather than 
last) $p$ indices are contracted.

Applying $\star$ twice (i.e. actually $\star_{(n-p)}\circ\star_p$)
leads to the following self-map of $\Wedge^pV^*$:
\begin{equation}
\label{eq:StarSquared}
\begin{split}
\star\bigl(&\star(\theta^{a_1}\wedge\cdots\wedge\theta^{a_p})\bigr)\\
&=\tfrac{1}{p!(n-p)!}
\varepsilon^{a_1\cdots a_p}_{\phantom{a_1\cdots a_p}a_{p+1}\cdots a_n}
\varepsilon^{a_{p+1}\cdots a_n}_{\phantom{a_{p+1}\cdots a_n}b_1\cdots b_p}\,
\theta^{b_1}\wedge\cdots\wedge\theta^{b_p}\\
&=\tfrac{(-1)^{p(n-p)}}{p!(n-p)!}\,
\varepsilon^{a_1\cdots a_pa_{p+1}\cdots a_n}
\varepsilon_{b_1\cdots b_pa_{p+1}\cdots a_n}\,
\theta^{b_1}\wedge\cdots\wedge\theta^{b_p}\\
&=(-1)^{p(n-p)}\,\langle\varepsilon,\varepsilon\rangle_{\mathrm{norm}}\
\theta^{a_1}\wedge\cdots\wedge\theta^{a_p}\,.
\end{split}
\end{equation}
Note that 
\begin{equation}
\label{eq:EpsilonSquared}
\langle\varepsilon,\varepsilon\rangle_{\mathrm{norm}}
=\tfrac{1}{n!} g^{a_1b_1}\cdots g^{a_nb_n}
\varepsilon_{a_1\cdots a_n}\varepsilon_{b_1\cdots b_n}
=(\varepsilon_{12\cdots n})^2/\det\{g(e_a,e_b)\}\,.
\end{equation}
This formula holds for any volume form $\varepsilon$ 
in the definition \eqref{eq:DefHodgeDuality-b}, 
independent of whether or not it is related to $g$. 

Since the right-hand side of \eqref{eq:DefHodgeDuality-b}
is symmetric under the exchange $\alpha\leftrightarrow\beta$,
so must be the left-hand side. Using \eqref{eq:StarSquared} 
we get 
\begin{equation}
\label{eq:StarStarInnProd1}
\begin{split}
\langle\alpha,\beta\rangle_{\mathrm{norm}}\,\varepsilon
&=\alpha\wedge\star\beta
=\beta\wedge\star\alpha
=(-1)^{p(n-p)}\star\alpha\wedge\beta\\
&=\langle\varepsilon,\varepsilon\rangle_{\mathrm{norm}}^{-1}\,
 \star\alpha\wedge\star\star\beta
=\langle\varepsilon,\varepsilon\rangle_{\mathrm{norm}}^{-1}\,
 \langle\star\alpha\,,\,\star\beta\rangle_{\mathrm{norm}}\,\varepsilon\,,
\end{split}
\end{equation}
hence
\begin{equation}
\label{eq:StarStarInnProd2}
\langle\star\alpha\,,\,\star\beta\rangle_{\mathrm{norm}}
=\langle\varepsilon,\varepsilon\rangle_{\mathrm{norm}}
 \langle\alpha,\beta\rangle_{\mathrm{norm}}\,.
\end{equation}
From this and \eqref{eq:StarSquared} it follows for 
$\alpha\in\Wedge^pV^*$ and $\beta\in\Wedge^{n-p}V^*$, 
that
\begin{equation}
\label{eq:StarAdjoint}
\langle\alpha,\star\beta\rangle_{\mathrm{norm}}
=\langle\varepsilon,\varepsilon\rangle_{\mathrm{norm}}^{-1}
\langle\star\alpha\,,\,\star\star\beta\rangle_{\mathrm{norm}}     
=\,(-1)^{p(n-p)}\,\langle\star\alpha,\beta\rangle_{\mathrm{norm}}\,.
\end{equation}
This shows that the adjoint map of $\star$ relative 
to $\langle\cdot\,,\,\cdot\rangle_{\mathrm{norm}}$ is 
$(-1)^{p(n-p)} \,\star$.  

Formulae \eqref{eq:StarSquared}, \eqref{eq:StarStarInnProd1}\eqref{eq:StarStarInnProd2}, 
and \eqref{eq:StarAdjoint} are valid for general $\varepsilon$ in the definition 
\eqref{eq:DefHodgeDuality-b}. If we choose $\varepsilon$ 
as the unique volume form that assigns unit volume to an 
oriented orthonormal frame, as does \eqref{eq:TopForm}, 
then we have 
\begin{equation}
\label{eq:VolumeFormSquared}
\langle\varepsilon,\varepsilon\rangle_{\mathrm{norm}}=(-1)^{n_-}
\end{equation}
where $n_-$ is the maximal dimension of subspaces 
in $V$ restricted to which $g$ is negative definite;
i.e. $g$ is of signature $(n_+,n_-)$. Equation 
\eqref{eq:StarStarInnProd2} then shows that $\star$ 
is an isometry for even $n_-$ and an anti-isometry for 
odd $n_-$. If $g$ is a Lorentzian inner product 
in $n$ dimensions then either $n_+=(n-1)$ and $n_-=1$,
which is called the ``mostly-plus convention'', or 
$n_+=1$ and $n_-=(n-1)$, which accordingly is called
the ``mostly-minus convention''. If $n$ is even $n_-$
is odd in both conventions, and no differences in signs 
occur. In contrast, for odd space-time dimensions all 
signs involving $(-1)^n_-$ differ in these two conventions.
In the mostly-minus convention that we followed in the main 
text, we have then 
$\star_{n-p}\circ\star_p=(-1)^{(n+1)(p+1)}\id\big\vert_{\Wedge^pV^*}$, 
where we used the following identity for integers mod(2):
$p(n-p)+(n-1)=np+p+n+1=(n+1)(p+1)$. This means that in even 
space-time dimensions (like the $n=4$ we are mostly interested 
in) $\star$ squares to the identity on odd forms
and to minus the identity on even forms, whereas in odd space-time
dimensions it always squares to the identity. As explained above,
this remains true for even $n$ in the mostly-plus convention,
but turns to the opposite signs for odd $n$ if the conventions
are changed.

Finally we note the following useful formula: If $v\in V$
let $i_v: T^pV^*\rightarrow T^{p-1}V^*$ the map which inserts 
$v$ into the first tensor factor. It restricts to a 
map $i_v: \Wedge^pV^*\rightarrow\Wedge^{p-1}V^*$. 
Then, for any $\alpha\in\Wedge^pV^*$, we have   
\begin{equation}
\label{eq:InsertionAndHodge}
i_v\star\alpha=\star(\alpha\wedge v^\flat)\,.
\end{equation}
using the notation \eqref{eq:DefMusicalIsomorphisms-a}.
It suffices to prove this for basis elements $v=e_a$ of $V$ 
and $\alpha=\theta^{a_1}\wedge\cdots\wedge\theta^{a_p}$ of $\Wedge^pV^*$, 
which is almost immediate using \eqref{eq:StarMapOnBasis2}.

An immediate consequence of \eqref{eq:InsertionAndHodge}
is that for any $v\in V$ we have the identity 
$i_v\star v^\flat=\star(v^\flat\wedge v^\flat)=0$ and hence that 
$i_v(v^\flat\wedge\star v^\flat)=g(v,v)\,\star v^\flat$.
On the other hand, from \eqref{eq:DefHodgeDuality-b} the 
left hand side of the last equation also equals to 
$\langle v^\flat,v^\flat\rangle i_v\varepsilon$.
Since $g(v,v)=\langle v^\flat,v^\flat\rangle_{\mathrm{norm}}$ we have 
\begin{equation}
\label{eq:HodgeOfVectors}
\star v^\flat=i_v\varepsilon\,.
\end{equation}

\newpage

\end{document}